\newcommand{\pref}{\succsim}
\newcommand{\sPref}{\succ}
\newcommand{\indiff}{\sim}
\newcommand{\replace}[3]{#1^{#2\mapsto {#3}}}
\newcommand{\sd}[1][]{\ensuremath{\ifthenelse{\equal{#1}{}}{\mathit{SD}}{\mathit{SD}(#1)}}}
\newcommand{\rsd}[1][]{\ifthenelse{\equal{#1}{}}{\mathit{RSD}}{\mathit{RSD}(#1)}}	
\newcommand{\supp}[1]{\mathrm{supp}(#1)}
\newcommand{\can}{\mathfrak{c}}
\newcommand{\isabellehol}{\textsf{Isabelle\slash HOL}\xspace}
\renewcommand{\enquote}[1]{``#1''}
\newcommand{\tabref}[1]{Table~\ref{#1}}
\newcommand{\appref}[1]{Appendix~\ref{#1}}
\newcommand{\exref}[1]{Example~\ref{#1}}
\newcommand{\ie}{i.e.,\xspace}
\newcommand{\eg}{e.g.,\xspace}
\newcommand{\cf}{cf.\@\xspace}
\title{Proving the Incompatibility of Efficiency and Strategyproofness via {SMT} Solving}
\author{Florian Brandl}
\email{brandlfl@in.tum.de}
\affiliation{Technische Universit\"at M\"unchen}
\author{Felix Brandt}
\email{brandtf@in.tum.de}
\affiliation{Technische Universit\"at M\"unchen}
\author{Manuel Eberl}
\email{eberlm@in.tum.de}
\affiliation{Technische Universit\"at M\"unchen}
\author{Christian Geist}
\email{geist@in.tum.de}
\affiliation{Technische Universit\"at M\"unchen}
\begin{document}

\begin{abstract}
Two important requirements when aggregating the preferences of multiple agents are that the outcome should be economically efficient and the aggregation mechanism should not be manipulable. 
In this paper, we provide a computer-aided proof of a sweeping impossibility using these two conditions for randomized aggregation mechanisms. 
More precisely, we show that every efficient aggregation mechanism can be manipulated for \emph{all} expected utility representations of the agents' preferences. 
This settles an open problem 
and strengthens a number of existing theorems, including statements that were shown within the special domain of assignment. 
Our proof is obtained by formulating the claim as a satisfiability problem over predicates from real-valued arithmetic, which is then checked using an SMT (satisfiability modulo theories) solver. In order to verify the correctness of the result, a minimal unsatisfiable set of constraints returned by the SMT solver was translated back into a proof in higher-order logic, which was automatically verified by an interactive theorem prover. 
To the best of our knowledge, this is the first application of SMT solvers in computational social choice.
\end{abstract}

\maketitle

\section{Introduction}

Models and results from microeconomic theory, in particular from game theory and social choice, have proven to be very valuable when reasoning about computational multiagent systems \citep[see, \eg][]{NRTV07a,ShLe08a,Roth15a,BCE+14a}.
Two fundamental notions in this context are efficiency---no agent can be made better off without making another one worse off---and strategyproofness---no agent can obtain a more preferred outcome by manipulating his preferences.
\citet{Gibb73a} and \citet{Satt75a} have shown that every strategyproof social choice function is either dictatorial or imposing. Hence, strategyproofness can only be achieved at the cost of discriminating among the agents or among the alternatives. One natural possibility to restore fairness is to allow for randomization. 
Functions that map a profile of individual preferences to a probability distribution over alternatives (a so-called \emph{lottery}) are known as \emph{social decision schemes (SDSs)}. The use of lotteries for the selection of officials interestingly goes back to the world's first democracy in Athens, where it was widely regarded as a principal characteristic of democracy \citep{Head33a}, and has recently gained increasing attention in political science \citep[see, \eg][]{Dowl09a,Guer14a} and social choice \citep[see, \eg][]{Bran17a}.

Generalizing his previous result, \citet{Gibb77a} proved that the only strategyproof and \emph{ex post} efficient social decision schemes are randomizations over dictatorships.\footnote{Alternative proofs for this important theorem were provided by \citet{Dugg96a}, \citet{Nand97a}, and \citet{Tana03a}.} Gibbard's notion of strategyproofness requires that no agent is better off by manipulating his preferences for \emph{some} expected utility representation of the agents' ordinal preferences. This condition is quite demanding because an SDS may be deemed manipulable just because it can be manipulated for a contrived and highly unlikely utility representation.
In this paper, we adopt a weaker notion of strategyproofness, first used by \citet{PoSc86a} and popularized by \citet{BoMo01a}. 
This notion requires that no agent should be better off by manipulating his preferences for \emph{all} expected utility representations of the agents' preferences. 
At the same time, we use a stronger notion of efficiency than \citet{Gibb77a}. 
This notion is defined in analogy to our notion of strategyproofness and requires that no agent can be made better off for \emph{all} utility representations of the agents' preferences, without making another one worse off for \emph{some} utility representation. 
This type of efficiency was introduced by \citet{BoMo01a} and is also known as ordinal efficiency or $\sd$-efficiency where $\sd$ stands for stochastic dominance.

Our main result establishes that no anonymous and neutral SDS satisfies efficiency and strategyproofness. 
This settles a conjecture by \citet{ABBH12a} and strengthens theorems by \citet{ABBH12a}, \citet{ABB13d}, and \citet{BBS15b}. 
It also generalizes related statements that were shown within the special domain of assignment, when interpreting them as social choice results \citep{Zhou90a,BoMo01a,KaSe06a,Koji09a,Nest16a,AzKa17a}. 

The proof of our main result heavily relies on computer-aided solving techniques. 
These techniques were introduced in computational social choice by \citet{TaLi09a}, who reduce well-known impossibility results, such as Arrow's theorem, to finite instances, which can then be checked by a Boolean satisfiability (SAT) solver. 
More recently, this idea has been adapted to more complex settings and axioms while focussing on proving new results rather than reproducing existing ones \citep{GeEn11a,BBGH15a,BGP15a,BrGe15a}. An overview of computer-aided theorem proving in computational social choice is given by \citet{GePe17a}.

In this paper, we go beyond the SAT-based techniques of previous contributions by designing an SMT (satisfiability modulo theories) encoding that captures axioms for \emph{randomized} social choice. 
SMT can be viewed as an enriched form of the satisfiability problem (SAT) where Boolean variables are replaced by statements from a \emph{theory}, such as specific data types or arithmetics. 
Similar to SAT, there is a range of SMT solvers developed by an active community that runs annual competitions \citep{BDM+13a}.  
Typically, SMT solvers are used as backends for verification tasks such as the verification of software. 
To capture axioms about lotteries, we use the theory of (quantifier-free) linear real arithmetic. 
Solving this version of SMT can be seen as an extension to \emph{linear programming} in which arbitrary Boolean operators are allowed to connect (in-)equalities.%

Following the idea of \citet{BrGe15a}, we extracted a \emph{minimal unsatisfiable set (MUS)} of constraints in order to verify our result. 
Despite its relatively complex 94 (non-trivial) constraints, which operate on 47 canonical preference profiles, the MUS was translated back into a proof in higher-order logic, which in turn was verified via the interactive theorem prover \isabellehol. This releases any need to verify our program for generating the SMT formula. 
We also translated this proof into a human-readable---but tedious to check---proof, which is given in the Appendix.

\section{The Model}\label{sec:prelims}

Let $A$ be a finite set of $m$ alternatives and $N=\{1,\dots,n\}$ a set of agents.
A \emph{(weak) preference relation} is a complete and transitive binary relation on~$A$.
The preference relation reported by agent~$i$ is denoted by~$\pref_i$, and the set of all preference relations by~$\mathcal{R}$.
In accordance with conventional notation, we write~$\sPref_i$ for the strict part of~$\pref_i$, \ie~$x \sPref_i y$ if~$x \pref_i y$ but not~$y \pref_i x$, and~$\indiff_i$ for the indifference part of~$\pref_i$, \ie~$x \indiff_i y$ if~$x \pref_i  y$ and~$y \pref_i  x$. 
A preference relation $\pref_i$ is \emph{linear} if $x \sPref_i y$ or~$y\sPref_i x$ for all distinct alternatives $x,y \in A$. 
We will compactly represent a preference relation as a comma-separated list where all alternatives among which an agent is indifferent are placed in a set. For example, $x \sPref_i y \indiff_i z$ is represented by 
${\pref_i} \colon x, \{y,z\}.$
A \emph{preference profile} $R = ({\pref_1},\dots, {\pref_n})$ is an $n$-tuple containing a preference relation $\pref_i$ for each agent $i \in N$. The set of all preference profiles is thus given by $\mathcal{R}^N$. For a given $R\in \mathcal{R}^N$ and ${\succsim}\in \mathcal{R}$, $\replace{R}{i}{\succsim}$ denotes a preference profile identical to $R$ except that $\succsim_i$ is replaced with $\succsim$, \ie $\replace{R}{i}{\succsim} = R\setminus\{(i,{\pref}_i)\}\cup\{(i,\succsim)\}\text.$

\subsection{Social Decision Schemes}

Our central objects of study are social decision schemes: functions that map a preference profile to a \emph{lottery (or probability distribution)} over the alternatives. 
The set of all lotteries over $A$ is denoted by $\Delta(A)$, \ie $\Delta(A) = \{p\in \mathbb{R}_{\ge 0}^A\colon \sum_{x\in A} p(x) = 1\}$, where $p(x)$ is the probability that $p$ assigns to $x$.
Formally, a \emph{social decision scheme (SDS)} is a function $f \colon \mathcal{R}^N \rightarrow \Delta(A)$. By $\supp p$ we denote the \emph{support} of a lottery $p\in\Delta(A)$, \ie the set of all alternatives to which $p$ assigns positive probability.
Two common minimal fairness conditions for SDSs are anonymity and neutrality, \ie symmetry with respect to agents and alternatives, respectively. 
Formally, \emph{anonymity} requires that $f(R)=f(R\circ\sigma)$ for all $R\in \mathcal{R}^N$ and permutations $\sigma\colon N\rightarrow N$ over agents. 
\emph{Neutrality}, on the other hand, is defined via permutations over alternatives. 
An SDS $f$ is \emph{neutral} if $f(R)(x) = f(\pi(R))(\pi(x))$ for all $R \in \mathcal{R}^N$, permutations $\pi\colon A\rightarrow A$, and $x \in A$.\footnote{$\pi(R)$ is the preference profile obtained from $\pi$ by replacing $\pref_i$ with $\pref_i^\pi$ for every $i\in N$, where $\pi(x) \mathrel{\pref_i^{\pi}} \pi(y)$ if and only if $x \pref_i y$.}

\subsection{Efficiency and Strategyproofness}\label{sec:effAndSP}

Many important properties of SDSs, such as efficiency and strategyproofness, require us to reason about the preferences that agents have over lotteries. 
This is commonly achieved by assuming that in a preference profile $R$ every agent $i$, in addition to this preference relation $\succsim_i$, is equipped with a von Neumann-Morgenstern (vNM) \emph{utility function} $u^{R}_i\colon A\rightarrow \mathbb{R}$. 
By definition, a utility function $u^{R}_i$ has to be consistent with the ordinal preferences, \ie for all $x,y\in A$, $u^{R}_i(x)\ge u^{R}_i(y)$ iff $x \succsim_i y$. 
A \emph{utility representation} $u$ then associates with each preference profile $R$ an $n$-tuple $(u_1^R,\dots,u_n^R)$ of such utility functions. 
Whenever the preference profile $R$ is clear from the context, the superscript will be omitted and we write $u_i$ instead of the more cumbersome $u^{R}_i$. 

Given a utility function $u_i$, agent $i$ prefers lottery $p$ to lottery $q$ iff the expected utility for $p$ is at least as high as that of $q$. With slight abuse of notation the domain of utility functions can be extended to $\Delta(A)$ by taking expectations, \ie
\[
u_i(p) = \sum_{x\in A} p(x)u_i(x)\text.
\]
It is straightforward to define efficiency and strategyproofness using expected utility. 
For a given utility representation $u$ and a preference profile $R$, a lottery $p$ \emph{$u$-(Pareto-)dominates} another lottery $q$ at $R$ if
\begin{align*}
u^R_i(p)&\ge u^R_i(q) \text{ for all }i\in N\text{, and }\\
u^R_i(p)&> u^R_i(q) \text{ for some }i\in N\text.
\end{align*}
An SDS $f$ is \emph{$u$-efficient} if it never returns $u$-dominated lotteries, \ie for all $R\in\mathcal{R}^N$, $f(R)$ is not $u$-dominated at $R$. 
The notion of $u$-strategyproofness can be defined analogously: for a given utility representation $u$, preference profile $R$, agent $i$, and preference relation $\succsim$, an SDS $f$ can be \emph{$u$-manipulated} at $R$ by agent $i$ reporting $\succsim$ if
\[
u^{R}_i(f(\replace{R}{i}{\succsim})) > u^{R}_i(f(R))\text.
\]
An SDS is \emph{$u$-strategyproof} if there is no preference profile $R$, agent $i$, and preference relation $\succsim$ such that it can be $u$-manipulated at $R$ by agent $i$ reporting $\succsim$.

The assumption that the vNM utility functions of all agents (and thus their complete preferences \emph{over lotteries}) are known is quite unrealistic. 
Often even the agents themselves are uncertain about their preferences over lotteries and are only aware of their ordinal preferences over alternatives.\footnote{When assuming that all agents possess vNM utility functions, these utility functions could be taken as inputs for the aggregation function. Such aggregation functions are called \emph{cardinal decision schemes} \citep[see, \eg][]{DPS07a}. Concrete vNM utility functions are often unavailable and their representation may require infinite space.} 
A natural way to model this uncertainty is to leave the utility functions unspecified and instead \emph{quantify over all utility functions} that are consistent with the agents' ordinal preferences. 
This modeling assumption leads to much weaker notions of efficiency and strategyproofness. 
\begin{definition} 
An SDS is \emph{efficient} if it never returns a lottery that is $u$-dominated for all utility representations $u$.
\end{definition}
As mentioned in the introduction, this notion of efficiency is also known as \emph{ordinal efficiency} or \emph{$\sd$-efficiency} \citep[see, \eg][]{BoMo01a,ABB14b,ABBB15a}. 
The relationship to stochastic dominance will be discussed in more detail in \Cref{sec:sd}. 

\begin{example}\label{ex:eff}
Consider $A=\{a,b,c,d\}$ and the preference profile $R=({\pref_1},\dots,{\pref_4})$,
\begin{align*}
	&{\pref_1}\colon \{a,c\},\{b,d\}\text, &{\pref_2}\colon& \{b,d\},\{a,c\}\text,\\&{\pref_3}\colon \{a,d\},b,c\text, &{\pref_4}\colon& \{b,c\},a,d
\end{align*} 
Observe that the lottery $\nicefrac{7}{24}\, a+\nicefrac{7}{24}\, b+\nicefrac{5}{24}\, c+\nicefrac{5}{24}\, d$, which is returned by the well-known SDS \emph{random serial dictatorship ($\rsd$)}, is $u$-dominated by $\nicefrac{1}{2}\, a+\nicefrac{1}{2}\, b$ for every utility representation $u$. 
Hence, any SDS that returns this lottery for the profile $R$ would not be efficient. 
On the other hand, the lottery $\nicefrac{1}{2}\, a+\nicefrac{1}{2}\, b$ is not $u$-dominated, which can, for instance, be checked via linear programming (see \Cref{lem:lp}). 
\end{example}

We can also define a weak notion of strategyproofness in analogy to our notion of efficiency.
\begin{definition}
An SDS is \emph{manipulable} if there is a preference profile $R$, an agent $i$, and a preference relation $\succsim$ such that it is $u$-manipulable at $R$ by agent $i$ reporting $\succsim$ for all utility representations $u$.
An SDS is \emph{strategyproof} if it is not manipulable.
\end{definition}
Alternatively, there is a stronger version of strategyproofness first considered by \citet{Gibb77a}, which prescribes that an SDS should be $u$-strategyproof for all utility representations $u$.

For more information concerning the relationship between sets of possible utility functions and preference extensions, such as stochastic dominance, the reader is referred to \citet{ABB14b}.

\section{The Result}\label{sec:result}

Our main result shows that efficiency and strategyproofness are incompatible with basic fairness properties. \citet{ABBH12a} raised the question whether there exists an anonymous, efficient, and strategyproof SDS. When additionally requiring neutrality, we can answer this question in the negative. 

\begin{theorem}\label{thm:sdsConj}\label{THM:SDSCONJ} 
	If $m\ge 4$ and $n\ge 4$, there is no anonymous and neutral SDS that satisfies efficiency and strategyproofness.
\end{theorem}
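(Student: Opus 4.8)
\emph{Overall plan and reduction to the base case.} The idea is to reformulate the (infinite) impossibility as the infeasibility of a finite system of linear-arithmetic constraints, let an SMT solver certify that infeasibility, and then re-derive the contradiction by hand and in a proof assistant so that the encoding program need not be trusted. First I would reduce to $m=n=4$. Suppose $f$ is an anonymous, neutral, efficient, and strategyproof SDS on $m\ge 4$ alternatives and $n\ge 4$ agents. Fix a four-element set $A'\subseteq A$ and four agents, and embed a profile over $A'$ and those agents into the full domain by ranking $A\setminus A'$ strictly below $A'$ for every agent and letting the remaining agents report the completely indifferent relation. Efficiency forces $\supp{f(R)}\subseteq A'$ (shifting probability from a bottom alternative to a top one is a strict stochastic-dominance improvement for every agent), the indifferent agents are Pareto-irrelevant and have no profitable manipulation, and every permutation of $A'$ extends to a permutation of $A$ fixing $A\setminus A'$ pointwise; hence the induced SDS on $A'$ and four agents again satisfies all four properties. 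I expect this step to be routine but mildly fiddly at the boundary (handling ties and the precise role of the added agents).

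\emph{From ``all utility representations'' to stochastic dominance.} Next I would eliminate the quantifier over utility functions using the stochastic-dominance characterization developed in \Cref{sec:sd}: a lottery is $u$-dominated for \emph{all} consistent $u$ iff it is strictly dominated in the $\sd$ order, and agent $i$ can $u$-manipulate for all consistent $u$ iff the deviation yields a strict $\sd$-improvement from $i$'s point of view. Since the $\sd$ order only compares the finitely many upper-set probabilities $p(\{y\colon y\pref_i x\})$, both efficiency and strategyproofness turn into Boolean combinations of linear (in)equalities in the lottery probabilities, \ie quantifier-free linear real arithmetic. The residual ``for all dominating lotteries $q$'' in efficiency I would remove via LP duality (or, equivalently for four alternatives, by enumerating the $\sd$-efficient supports; \cf \Cref{lem:lp}), so that ``$f(R)$ is not $\sd$-dominated'' becomes a finite disjunction of linear conditions on the variables $p_R$.

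\emph{The finite encoding.} With $A=\{a,b,c,d\}$ and $N=\{1,2,3,4\}$, anonymity and neutrality make $f$ constant (up to relabelling the lottery) on each orbit of the action of agent- and alternative-permutations on profiles, so I would use one probability vector $\bigl(p_R(a),p_R(b),p_R(c),p_R(d)\bigr)$ per orbit representative $R$, subject to: the simplex constraints $p_R\ge 0$, $\sum_{x} p_R(x)=1$; the neutrality self-symmetry equalities $p_R(\pi x)=p_R(x)$ that are forced whenever $R$ is invariant, up to renaming agents, under a nontrivial permutation $\pi$ of the alternatives; the $\sd$-efficiency constraint for each representative; and the $\sd$-strategyproofness constraint for every ordered pair of representatives linked by a single-agent deviation $R\mapsto\replace{R}{i}{\pref}$. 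Starting from a small pool of profiles with few distinct preference types, I would keep adding profiles and deviations and re-running an SMT solver over linear real arithmetic until it reports unsatisfiability.

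\emph{Verification and the crux.} Finally I would extract a minimal unsatisfiable subset of the constraints (the paper isolates $47$ canonical profiles and $94$ constraints) and replay the argument: chaining the efficiency restrictions with the strategyproofness inequalities pins the lottery on some profile down tightly enough to contradict one of its own linear constraints, and this numeric derivation can be checked by hand and formalized in \isabellehol, eliminating any trust in the generator program. The main obstacle is the search in the previous step: there is no a priori bound on the number of profiles needed, so finding a fragment that is simultaneously small enough to be solvable and rich enough that efficiency already clashes with strategyproofness is the real work; a secondary concern is making the stochastic-dominance reformulations of both axioms exactly tight, which the round trip through \isabellehol is designed to settle.
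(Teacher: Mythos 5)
Your proposal is correct and follows essentially the same route as the paper: the reduction to $m=n=4$ by padding with dominated alternatives and indifferent agents (\Cref{lem:reduction}), the elimination of the utility quantifier via stochastic dominance (\Cref{lem:sdEquiv}), the support-based/LP treatment of efficiency (\Cref{lem:support,lem:lp}), canonical profile representatives with orbit conditions for anonymity and neutrality, incremental SMT solving over a restricted manipulation neighbourhood, and MUS extraction with \isabellehol{} verification. The only part not spelled out---the explicit chain of inequalities over the 47 profiles---is exactly the computer-generated content reproduced in \appref{app:proof}, and you correctly identify the domain search as the substantive difficulty.
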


The proof of \Cref{thm:sdsConj}, which heavily relies on computer-aided solving techniques, is discussed in \Cref{sec:proof}. 
Let us first discuss the independence of the axioms and relate the result to existing theorems. 
$\rsd$ satisfies all axioms \emph{except efficiency}; another SDS known as \emph{maximal lotteries} satisfies all axioms \emph{except strategyproofness} \citep[cf.][]{ABBB15a}. 
Serial dictatorship, the deterministic version of $\rsd$, satisfies neutrality, efficiency, and strategyproofness \emph{but violates anonymity}. 
It is unknown whether \Cref{thm:sdsConj} still holds when dropping the assumption of neutrality.
Our proof, however, only requires a technical weakening of neutrality (cf. \Cref{sec:framework}). 

\subsection{Related Results for Social Choice}

\Cref{thm:sdsConj} generalizes several existing results and is closely related to a number of results in subdomains of social choice. 
\citet{ABBH12a} proved a weak version of \Cref{thm:sdsConj} for the rather restricted class of majoritarian SDSs, \ie SDSs whose outcome may only depend on the pairwise majority relation. 
This statement has later been generalized by \citet{ABB13d} to all SDSs whose outcome only depends on the \emph{weighted} majority relation. 
More recently, \citet{BBS15b} have shown that while random dictatorship satisfies efficiency and strategyproofness (as well as anonymity and neutrality) on the domain of linear preferences, it cannot be extended to the full domain of weak preferences without violating at least one of these properties. 
Their theorem is a direct consequence of \Cref{thm:sdsConj}. 
Other impossibility results have been obtained for stronger notions of efficiency and strategyproofness, which weakens the corresponding statements. 
\citet{ABB13d} have shown that there is no anonymous and neutral SDS that satisfies efficiency and strategyproofness with respect to the \emph{pairwise comparison} lottery extension and with respect to the \emph{upward lexicographic} extension.\footnote{The statement for the pairwise comparison extension holds for at least three agents and three alternatives, whereas \Cref{thm:sdsConj} does not hold for less then four alternatives since $\rsd$ satisfies all properties for up to three alternatives. In contrast to \Cref{thm:sdsConj}, the statement for the upward lexicographic extension does not require neutrality and also holds for linear preferences.}
Both of these notions of efficiency and strategyproofness are stronger than the ones used in \Cref{thm:sdsConj}.

\subsection{Related Results for Assignment}\label{sec:assign}
A subdomain of social choice that has been thoroughly studied in the literature is the assignment (aka house allocation or two-sided matching with one-sided preferences) domain. 
Assignment problems are concerned with the allocation of objects to agents based on the agents' preferences over objects. 
An assignment problem can be associated with a social choice problem by letting the set of alternatives be the set of deterministic allocations and postulating that agents are indifferent among all allocations in which they receive the same object \citep[see, \eg][]{ABB13b}.\footnote{Note that this transformation turns assignment problems with linear preferences over $k$ objects into social choice problems with weak preferences over $k!$ allocations.} Thus, impossibility results for the assignment setting can be interpreted as impossibility results for the social choice setting because they even hold in a smaller domain and an SDS that satisfies efficiency and strategyproofness in the social choice domain also satisfies these properties in any subdomain.

In the following we discuss impossibility results in the assignment domain which, if interpreted for the social choice domain and when assuming anonymity and neutrality, can be seen as weaker versions of \Cref{thm:sdsConj} because they are based on stronger notions of efficiency or strategyproofness or require additional properties. 
In a very influential paper, \citet{BoMo01a} have shown that no randomized assignment mechanism satisfies both efficiency and a strong notion of strategyproofness while treating all agents equally. 
The underlying notion of strategyproofness is identical to the one used by \citet{Gibb77a} and prescribes that the SDS cannot be $u$-manipulated for \emph{any} utility representation $u$. 
The result by \citeauthor{BoMo01a} even holds when preferences over objects are single-peaked \citep{Kasa12a}.
In a related paper, \citet{KaSe06a} proved that no assignment mechanism satisfies efficiency, strategyproofness, and envy-freeness 
for the full domain of preferences.\footnote{Envy-freeness is a fairness property that is stronger than \emph{equal treatment of equals} as used by \citet{BoMo01a}.}
Related impossibility theorems for varying notions of envy-freeness and for multi-unit demand with additive preferences were shown by \citet{Nest16a}, \citet{Koji09a}, and \citet{AzKa17a}.

Settling a conjecture by \citet{Gale87a}, \citet{Zhou90a} showed that no cardinal assignment mechanism satisfies $u$-efficiency and $u$-strategyproofness while treating all agents equally.\footnote{The theorem by \citeauthor{Zhou90a} only requires that agents with the same utility function receive the same amount of expected utility but not necessarily the same assignment. \citeauthor{Gale87a}'s original conjecture assumed equal treatment of equals.} 
The relationship between \citeauthor{Zhou90a}'s result and \Cref{thm:sdsConj} is not obvious because \citeauthor{Zhou90a}'s theorem concerns cardinal mechanisms, \ie functions that take utility profiles rather than preference profiles as input. 
However, every cardinal assignment mechanism can be associated with an ordinal assignment mechanism by selecting some consistent utility function for every preference relation and returning the outcomes for the corresponding utility profiles.
This transformation turns a $u$-efficient and $u$-strategyproof cardinal mechanism into an efficient and strategyproof ordinal mechanism as these properties are purely ordinal.
Hence, \Cref{thm:sdsConj} implies that there is no anonymous, neutral, $u$-efficient, and $u$-strategyproof cardinal decision scheme.

\section{Proving the Result}\label{sec:proof}
In this section, we first reduce the statement of \Cref{thm:sdsConj} to the special case of $m = 4$ and $n = 4$, which we then prove via SMT solving. 
We present an encoding for any finite instance of \Cref{thm:sdsConj} as an SMT problem in the logic of (quantifier-free) linear real arithmetic (\texttt{QF\_LRA}). 
For compatibility with different SMT solvers our encoding adheres to the \textsf{SMT-LIB} standard \citep{BSC10a}. 
In total, we are going to design the following four types of SMT constraints: 
\begin{itemize}
	\item lottery definitions \eqref{cond:lottery},
	\item the orbit condition\footnote{The orbit condition models a part of neutrality.} \eqref{cond:orbit},
	\item strategyproofness \eqref{cond:strategyproofness}, and
	\item efficiency \eqref{cond:eff}.
\end{itemize}
Other conditions such as anonymity are taken care of by the representation of preference profiles. 
	
We then apply an SMT solver to show that this set of constraints for the case of $m = 4$ and $n = 4$ is unsatisfiable, \ie no SDS $f$ with the desired properties exists, and
explain how the output of the solver can be used to obtain a human-verifiable proof of this result. 

Let us start with the reduction lemma before we turn to the concrete encoding in the following subsections.  

\begin{lemma}\label{lem:reduction}
	If there is an anonymous and neutral SDS $f$ that satisfies efficiency and strategyproofness for $|A|=m$ alternatives and $|N|=n$ agents then, for all $m'\le m$ and $n'\le n$, we can also find an SDS $f'$ defined for $m'$ alternatives and $n'$ agents that satisfies the same properties. 
\end{lemma}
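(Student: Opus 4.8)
The plan is to treat the two reductions separately and then compose them: first lower the number of agents from $n$ to $n'$, then lower the number of alternatives from $m$ to $m'$. Both steps follow one template — embed every instance of the smaller problem canonically into an instance of the larger one, define $f'$ by reading off the value of $f$ on the embedded instance, and verify that each of anonymity, neutrality, efficiency, and strategyproofness carries over. The technical point to keep flagged throughout is the quantifier \enquote{for all utility representations $u$}: a utility representation for the smaller instance can always be realised as the restriction of a utility representation for the larger instance (to the relevant agents, respectively the relevant alternatives, at the relevant profile), and consistency with the ordinal preferences is preserved under such restriction; consequently \enquote{$u$-dominated / $u$-manipulable for all $u$} translates faithfully in both directions.

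\emph{Reducing the number of agents.} The naive idea of duplicating one agent breaks anonymity, so instead I would pad an $n'$-agent profile $R'$ over the full alternative set $A$ to an $n$-agent profile $R$ by appending $n-n'$ agents who are indifferent among all alternatives, and set $f'(R') := f(R)$. Anonymity survives because permuting the first $n'$ agents and fixing the padding agents is itself a permutation of the full agent set. Neutrality survives because the all-indifferent preference is fixed by every permutation $\pi$ of $A$, so the padding operation commutes with $\pi$. For efficiency and strategyproofness one uses that the padding agents have constant utility functions and hence only contribute equalities: any dominating lottery, resp.\ any profitable deviation, at $R'$ (for all $u$) lifts verbatim to one at $R$ (for all $u$), contradicting the assumed efficiency, resp.\ strategyproofness, of $f$.

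\emph{Reducing the number of alternatives.} Fix $A' \subseteq A$ with $|A'| = m'$ and embed a profile $R'$ over $A'$ into a profile $R$ over $A$ by extending each $\pref_i$ so that every alternative of $A \setminus A'$ is strictly worse than every alternative of $A'$ and all alternatives of $A\setminus A'$ are mutually indifferent. The crux is that \emph{efficiency forces $\supp{f(R)} \subseteq A'$}: if $f(R)$ assigned positive probability to $A \setminus A'$, then shifting that probability mass onto an arbitrary fixed $a^* \in A'$ would strictly increase every agent's expected utility under every consistent utility representation (since every agent strictly prefers $a^*$ to each alternative of $A \setminus A'$), so $f(R)$ would be $u$-dominated for all $u$ — impossible. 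Hence $f(R) \in \Delta(A')$, and we may set $f'(R') := f(R)$, now viewed as a lottery over $A'$. Anonymity and neutrality transfer exactly as before (the embedding commutes with agent permutations, and with permutations of $A'$ once these are extended to $A$ by fixing $A \setminus A'$). Efficiency and strategyproofness transfer because, thanks to the support observation, $f(R)$ and $f'(R')$ are literally the same lottery, and restricting a utility representation for $R$ (resp.\ for $\replace{R}{i}{\succsim}$) to $A'$ yields a consistent representation for $R'$ (resp.\ for $\replace{R'}{i}{\succsim'}$), so a witness of domination or of manipulation on the small instance produces one on the large instance, contradicting efficiency, resp.\ strategyproofness, of $f$.

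Composing the two constructions (or, equivalently, using the single combined embedding that adds both the bottom alternatives and the indifferent agents at once) yields the desired $f'$ for every $m' \le m$ and $n' \le n$. The main obstacle — and the only place where the argument uses anything beyond bookkeeping — is the alternative reduction: one needs efficiency precisely to know that the freshly added bottom alternatives receive zero probability, for otherwise $f'(R')$ would fail to be a lottery over $A'$ at all. Everything else is the routine verification that the canonical embeddings respect permutations of agents and of alternatives and that the \enquote{for all utility representations} quantifier passes through restriction.
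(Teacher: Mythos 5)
Your construction is exactly the paper's: pad with $n-n'$ fully indifferent agents, rank $A\setminus A'$ strictly below $A'$ and mutually indifferent, and use efficiency to force $\supp{f(R)}\subseteq A'$ so that $f'(R'):=f(R)$ is well-defined; the paper merely performs both embeddings in one step via its conditions (1)--(4), which you also note is possible. The proposal is correct and takes essentially the same approach.
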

\begin{proof}
	Let $f$ be an anonymous and neutral SDS that satisfies efficiency and strategyproofness for $m$ alternatives and $n$ agents. 
	We define a projection $f'$ of $f$ onto $A'\subseteq A, |A'|=m'\le m$ and $N'=\{1,\dots,n'\}\subseteq N, n'\le n$ that satisfies all required properties: 
	
	For every preference profile $R'$ on $A'$ and $N'$, let $f'(R') = f(R)$, where $R$ is defined by the following conditions:
\begin{align}
	\label{eq1} &{\pref}_i\cap (A'\times A') = {\pref}'_i \text{ for all } i\in N'\text,\\ 
	\label{eq2} &x\sPref_i y \text{ for all } x\in A', y\in A\setminus A' \text{ and } i\in N'\text,\\
	\label{eq3} &y\indiff_i z \text{ for all } y,z\in A\setminus A' \text{ and } i\in N'\text{, and}\\
	\label{eq4} &y\indiff_i z \text{ for all } y,z\in A \text{ and } i\in N\setminus N'\text.
\end{align}
	Informally, by \eqref{eq1} agents in $N'$ have the same preferences over alternatives from $A'$ in $R$ and $R'$. 
	Moreover, by \eqref{eq2} they like every alternative in $A'$ strictly better than every alternative not in $A'$ and by \eqref{eq3} they are indifferent between all alternatives not in $A'$. 
	Finally, by \eqref{eq4} all agents in $N\setminus N'$ are completely indifferent. 
	With these conditions, $R$ is uniquely specified given $R'$, and only lotteries $p$ with $\supp{p}\subseteq A'$ are efficient in $R$. 
	Thus, $f'$ is well-defined and it is 
	left to show that $f'$ inherits the relevant properties from $f$.
	The SDS $f'$ is anonymous since $f$ is anonymous and agents in $N$ can only differ by their preferences over $A'$.
	Neutrality follows as $f$ is neutral and all agents are indifferent between all alternatives not in $A'$.
	Efficiency is satisfied by $f'$ since $f$ is efficient and the same set of lotteries is efficient in $R$ and $R'$.
	Finally, $f'$ is strategyproof because $f$ is strategyproof and the outcomes of $f'$ under the two profiles $R'$ and $\replace{(R')}{i}{\pref'}$ are equal to the outcomes of $f$ under the two (extended) profiles $R$ and $\replace{R}{i}{\pref}$, respectively.
\end{proof}

\subsection{Framework, Anonymity, and Neutrality}\label{sec:framework}
For a given number of agents $n$ and set of alternatives $A$, we encode an arbitrary SDS $f \colon \mathcal{R}^N \rightarrow \Delta(A)$ by a set of real-valued variables $p_{R,x}$ with $R\in \mathcal{R}^N$ and $x\in A$. 
Each $p_{R,x}$ then represents the probability with which alternative $x$ is selected for profile $R$, \ie $p_{R,x} = f(R)(x)$. 

This encoding of lotteries leads to the first simple constraints for our SMT encoding, which ensure that for each preference profile $R$ the corresponding variables $p_{R,x}$, $x\in A$ indeed encode a lottery: 
\begin{align}\tag{Lottery}\label{cond:lottery}
	\begin{split}
	\sum_{x\in A} p_{R,x} = 1 & \text{ for all $R\in\mathcal{R}^N$, and }\\
	p_{R,x} \geq 0 & \text{ for all $R\in\mathcal{R}^N$ and $x\in A$.}
	\end{split}
\end{align}

We are now going to argue that, in conjunction with anonymity and neutrality (see \Cref{sec:prelims}), it suffices to consider these constraints for a subset of preference profiles. 
This is because, in contrast to the other axioms, we directly incorporate anonymity and neutrality into the structure of the encoding rather than formulating them as actual constraints.
Similar to the construction involving canonical tournament representations by \citet{BrGe15a}, we model anonymity and neutrality by computing for each preference profile $R\in\mathcal{R}^N$ a \emph{canonical representation} $R_\can\in\mathcal{R}^N$ with respect to these properties. 
In this representation, two preference profiles $R$ and $R'$ are equal (\ie $R_\can=R'_\can$) iff one can be transformed into the other by renaming the agents and alternatives. 
Equivalently, $R_\can=R'_\can$ iff, for every anonymous and neutral SDS $f$, the lotteries $f(R)$ and $f(R')$ are equal (modulo the renaming of the alternatives). 

The SMT constraints and SMT variables are then instantiated only for these canonical representations $\mathcal{R}^N_\can\subseteq\mathcal{R}^N$. 
Apart from enabling an encoding of anonymous and neutral SDSs without any explicit reference to permutations, this also offers a substantial performance gain compared to considering the full domain $\mathcal{R}^N$ of (non-anonymous and non-neutral) preference profiles: the number of preference profiles for $m = 4$ and $n = 4$ is 31,640,625, 
whereas the number of \emph{canonical} preference profiles is merely 60,865. 

Technically, we compute the canonical representation $R_\can$ as follows: 
Let $R=({\pref_1}, \dots, {\pref_n})\in\mathcal{R}^N$ be a preference profile. 
First, we identify $R$ with a function $r: \mathcal{R} \to \mathbb{N}$, which we call \emph{anonymous preference profile}, and which counts the number of agents with a certain preference relation, \ie $r({\pref}) = |\{i\in N \mid {\pref_i} = {\pref}\}|$, thereby ignoring the identity of the agents. 
This representation fully captures anonymity. 

To additionally enforce neutrality, we had to resort to a computationally demanding, naive solution: given $r$, we compute all anonymous preference profiles $\pi(r)$ that can be achieved via a permutation $\pi\colon A \to A$, and, among those profiles, choose the one $\pi_\mathrm{lexmin}(r)$ with lexicographically minimal values (for some fixed ordering of preference relations).
For the canonical representation $R_\can$ we then pick any preference profile $R\in\mathcal{R}^N$ which agrees with $\pi_\mathrm{lexmin}(r)$, for instance, by again using the same fixed ordering of preference relations. 
Fortunately, this approach is still feasible for the small numbers of alternatives with which we are dealing.

While this representation of preference profiles does not completely capture neutrality---the \emph{orbit condition} \citep[see][]{BrGe15a} is missing---this weaker version suffices to prove the impossibility. 
In favor of simpler proofs we, however, include the simple constraints corresponding to a randomized version of the orbit condition. 

In our context, an \emph{orbit} $O$ of a preference profile $R$ is an equivalence class of alternatives. 
Two alternatives $x,y\in A$ are considered equivalent if $\pi(x) = y$ for some permutation $\pi\colon A\to A$ that maps the anonymous preference profile associated with $R$ to itself (\ie $\pi$ is an automorphism of the anonymous preference profile). 
In such a situation, every anonymous and neutral SDS has to assign equal probabilities to $x$ and $y$. 
We hence require that, for each orbit $O\in\mathcal{O}_R$ of a (canonical) profile $R$, the probabilities $p_{R,x}$ are equal for all alternatives $x\in O$. 
As an SMT constraint, this reads 
\begin{equation}\label{cond:orbit}
	p_{R,x} = p_{R,y}\tag{Orbit}
\end{equation}
for all $R\in\mathcal{R}^N_\can$, $O\in\mathcal{O}_{R}$, and $x, y\in O$. 

\begin{example}
Consider the anonymous preference profile $r$ based on $R$ from \Cref{ex:eff} and the permutation $\pi = (a b)(c d)$.
As $\pi(r)=r$ (and since no other non-trivial permutation has this property) the set of orbits of $R$ is $\mathcal{O}_R=\left\{\{a,b\},\{c,d\}\right\}$. 
\end{example}

\subsection{Stochastic Dominance}\label{sec:sd}
In order to avoid quantifying over utility functions, we leverage well-known representations of efficiency and strategyproofness via \emph{stochastic dominance ($\sd$)} \citep[\cf][]{BoMo01a,McLe02a,ABB14b}. 
Lottery $p$ \emph{stochastically dominates} lottery $q$ for an agent $i$ (short: $p\pref_i^{\sd} q$) if for every alternative $x$, $p$ is at least as likely as $q$ to yield an alternative at least as good as $x$. 
Formally, 
\[
	p\pref_i^{\sd} q \text{ iff } \sum_{y\pref_i x} p(y) \ge \sum_{y\pref_i x} q(y) \text{ for all } x\in A\text.
\]
When $p\pref_i^{\sd} q$ and not $q\pref_i^{\sd} p$ we write $p\sPref_i^{\sd} q$. 

As an example, consider the preference relation ${\pref_i}\colon a,b,c$. 
We then have that  
\[
(\nicefrac{2}{3}\, a + \nicefrac{1}{3}\, c) \sPref_i^{\sd} (\nicefrac{1}{3}\, a + \nicefrac{1}{3}\, b + \nicefrac{1}{3}\, c)
\]
while $\nicefrac{2}{3}\, a + \nicefrac{1}{3}\, c$ and $b$ are incomparable based on stochastic dominance. 

\begin{lemma}\label{lem:sdEquiv}
	Let ${\pref_i}\in\mathcal{R}$. 
	A lottery $p$ $\sd$-dominates another lottery $q$ for an agent $i$ iff $u_i(p) \geq u_i(q)$ for every utility function $u_i$ consistent with $\pref_i$. 
	As a consequence, 
	\begin{enumerate}
		\item an SDS~$f$ is efficient iff, for all $R\in\mathcal{R}^N$, there is no lottery $p$ 
	such that $p\pref_i^{\sd} f(R)$ for all $i\in N$ and $p\sPref_i^{\sd} f(R)$ for some $i\in N$, and
		\item an SDS~$f$ is manipulable iff there exist a preference profile $R$, an agent $i$, and a preference relation $\pref$ such that $f(\replace{R}{i}{\pref}) \sPref_i^{\sd} f(R)$. 
	\end{enumerate}
\end{lemma}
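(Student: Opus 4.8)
The heart of the lemma is the first sentence, the equivalence between $\sd$-dominance and dominance in expected utility for all consistent $u_i$; the two numbered consequences then follow by unwinding the definitions of efficiency and manipulability from \Cref{sec:effAndSP}. So I would structure the proof in two parts: first establish the core equivalence, then deduce (1) and (2).

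For the core equivalence, fix an agent $i$ with preference relation ${\pref_i}$ and lotteries $p,q$. The easy direction is $(\Leftarrow)$: I would prove the contrapositive. Suppose $p$ does not $\sd$-dominate $q$, so there is an alternative $x$ with $\sum_{y\pref_i x} p(y) < \sum_{y\pref_i x} q(y)$. The trick is to build a consistent utility function that is, up to scaling, the indicator of the upper contour set $\{y : y\pref_i x\}$ — that is, $u_i(y)=1$ if $y\pref_i x$ and $u_i(y)=0$ otherwise. This is constant on indifference classes but may fail strict consistency where $\pref_i$ is strict; I would fix that by adding a sufficiently small $\varepsilon$-perturbation that respects the ordinal ranking (e.g.\ add $\varepsilon$ times any strictly consistent utility). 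For small enough $\varepsilon$ the strict inequality $u_i(p)<u_i(q)$ is preserved by continuity, so $u_i(p)\geq u_i(q)$ fails, which is what we want. For $(\Rightarrow)$: assume $p\pref_i^{\sd}q$ and let $u_i$ be any consistent utility function. Write $A$ in weakly decreasing order of $u_i$ as $A = \{a_1,\dots,a_m\}$ with $u_i(a_1)\geq\cdots\geq u_i(a_m)$; by consistency this refines $\pref_i$, so the $\sd$-inequality gives $\sum_{k\leq j} p(a_k) \geq \sum_{k\leq j} q(a_k)$ for every $j$ (each initial segment is a union of upper contour sets). Then Abel summation (summation by parts) expresses $u_i(p)-u_i(q) = \sum_{j=1}^{m-1} (u_i(a_j)-u_i(a_{j+1}))\bigl(\sum_{k\leq j}(p(a_k)-q(a_k))\bigr) + u_i(a_m)\bigl(\sum_k (p(a_k)-q(a_k))\bigr)$, where the last term vanishes since both are lotteries, and each summand in the first part is a product of a nonnegative coefficient $u_i(a_j)-u_i(a_{j+1})\geq 0$ with a nonnegative partial-sum difference, hence $u_i(p)\geq u_i(q)$.

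For the consequences: statement (1) is essentially a restatement. By \defref{def}, $f$ is efficient iff $f(R)$ is never $u$-dominated for all $u$. Negating, $f$ is \emph{not} efficient iff there is $R$ and a lottery $p$ that $u$-dominates $f(R)$ for all $u$, i.e.\ $u_i^R(p)\geq u_i^R(f(R))$ for all $i$ and all consistent $u^R$, with strict inequality for some $i$ for all consistent $u^R$. Applying the core equivalence agent by agent, the weak conditions become $p\pref_i^{\sd}f(R)$ for all $i$, and "strict for some $i$ under every $u$" becomes $p\sPref_i^{\sd} f(R)$ for some $i$ — here one must be a little careful that the agent witnessing strictness can be chosen uniformly in $u$, but since there are finitely many agents and "strictly better for \emph{some} $i$" under a fixed total dominance is exactly the failure of $p\indiff_i^{\sd} f(R)$ for all $i$, this goes through. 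Statement (2) is analogous and simpler: by \defref{def}, $f$ is manipulable iff there are $R,i,{\pref}$ such that $f(\replace{R}{i}{\pref})$ is $u$-manipulating, i.e.\ $u_i^R(f(\replace{R}{i}{\pref})) > u_i^R(f(R))$ for all consistent $u^R$; by the core equivalence (applied with the strict part), this is exactly $f(\replace{R}{i}{\pref}) \sPref_i^{\sd} f(R)$.

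\textbf{Main obstacle.} The only genuinely delicate point is the $(\Leftarrow)$ direction of the core equivalence — producing a \emph{strictly} consistent utility function whose expectations separate $p$ and $q$, rather than the merely weakly consistent step-function indicator. The $\varepsilon$-perturbation argument handles it, but it needs the observation that strict consistency is an open condition and the separating inequality is strict, so both survive a small enough perturbation. Everything else is bookkeeping with definitions and a standard summation-by-parts identity.
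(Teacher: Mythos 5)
Your proof is correct and follows essentially the same route as the paper: summation by parts for the direction from $\sd$-dominance to expected-utility dominance for every consistent $u_i$, and a slightly perturbed indicator of an upper contour set (the paper takes values in $[1-\nicefrac{\epsilon}{2},1]$ above $x$ and $[0,\nicefrac{\epsilon}{2}]$ below) for the converse. The paper leaves consequences (1) and (2) entirely implicit, so your unwinding of them---including the point that the agent witnessing strictness can be chosen uniformly in $u$---is extra care rather than a different approach.
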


\begin{proof}
	For the direction from left to right, assume that $p\pref_i^{\sd} q$.
	Without loss of generality, let $A = \{x_1,\dots,x_m\}$ and $x_j\pref_i x_k$ if and only if $j\le k$ for all $j,k\in \{1,\dots, m\}$.
	Then, by definition, for all $j\in\{1,\dots,m\}$, $\sum_{k=1}^j p(x_k)\ge \sum_{k=1}^jq(x_k)$.
	Let $u_i$ be a utility function consistent with $\pref_i$, \ie $u_i(x_j)\ge u_i(x_k)$ if and only if $j\le k$.
	Then,
	\begin{align*}
		u_i(p) - u_i(q) = \sum_{j=1}^m (p(x_j) - q(x_j))u_i(x_j)
		= \sum_{j=1}^m \underbrace{(u_i(x_j) - u_i(x_{j+1}))}_{\ge 0} \underbrace{\sum_{k=1}^j (p(x_k) - q(x_k))}_{\ge 0}  \ge 0\text,
	\end{align*}
	where $u_i(x_{m+1})$ is set to $0$. Hence, $u_i(p)\ge u_i(q)$.

	For the direction from right to left, assume that $u_i(p)\ge u_i(q)$ for all utility functions $u_i$ consistent with $\pref_i$.
	Assume for contradiction that $p\not\pref_i^{\sd} q$, \ie there is $x\in A$ such that $\sum_{y\pref_i x} q(x) - \sum_{y\pref_i x} p(x) = \epsilon > 0$.
	Let $u_i$ be a utility function consistent with $\pref_i$ such that $u_i(y) \in [1-\nicefrac{\epsilon}{2},1]$ for all $y\pref_i x$ and $u_i(y) \in [0,\nicefrac{\epsilon}{2}]$ for all $x\sPref_i y$.
	Such a $u_i$ exists, since $\epsilon > 0$.
	Then,
	\[
		u_i(q) \ge (1-\nicefrac{\epsilon}{2}) \sum_{y\pref_i x} q(y) \ge \sum_{y\succsim_i x} q(y) - \nicefrac{\epsilon}{2} > \sum_{y\pref_i x} p(y) + \nicefrac{\epsilon}{2}\ge u_i(p)\text,
	\]
	which contradicts the assumption.
\end{proof}

In words, \Cref{lem:sdEquiv} shows that an SDS $f$ is efficient if and only if $f(R)$ is Pareto-efficient with respect to stochastic dominance for all preference profiles $R$. Secondly, $f$ is manipulable if and only if some agent can misrepresent his preferences to obtain a lottery that he prefers to the lottery obtained by sincere voting with respect to stochastic dominance.

\subsubsection{Encoding Strategyproofness}

Starting from the above equivalence, encoding strategyproofness as an SMT constraint is now a much simpler task. 
For each (canonical) preference profile $R\in\mathcal{R}^N_\can$, agent $i\in N$,\footnote{Note that, due to anonymity, it is not necessary to iterate over all agents~$i$. Rather it suffices to pick one agent per unique preference relation contained in $R$.}
and preference relation ${\pref}\in\mathcal{R}$, we encode that the manipulated outcome $f(\replace{R}{i}{\pref})$ is not $\sd$-preferred to the truthful outcome $f(R)$ by agent $i$: 
\begin{align*}
	&\neg\left(f(\replace{R}{i}{\pref}) \sPref_i^{\sd} f(R) \right)\notag\\
	\equiv&\, f(\replace{R}{i}{\pref}) \not\pref_i^{\sd} f(R) \vee f(R) \pref_i^{\sd} f(\replace{R}{i}{\pref})\notag\\
	\equiv& \left(\left(\exists x\in A\right) \sum_{y\pref_i x} f(\replace{R}{i}{\pref})(y) < \sum_{y\pref_i x} f(R)(y)\right) \vee 
	\left(\left(\forall x\in A\right) \sum_{y\pref_i x} f(\replace{R}{i}{\pref})(y) \stackrel{\scriptscriptstyle(*)}{\le} \sum_{y\pref_i x} f(R)(y)\right)\\
		\equiv& \left(\bigvee_{x\in A} \sum_{y\pref_i x} p_{(\replace{R}{i}{\pref})_\can,\pi^{\replace{R}{i}{\pref}}_\can(y)} < \sum_{y\pref_i x} p_{R,y}\right) \vee 
		\left(\bigwedge_{x\in A} \sum_{y\pref_i x} p_{(\replace{R}{i}{\pref})_\can,\pi^{\replace{R}{i}{\pref}}_\can(y)} \stackrel{\scriptscriptstyle(**)}{=} \sum_{y\pref_i x} p_{R,y}\right)\text{,}
		\tag{Strategyproofness}\label{cond:strategyproofness}
\end{align*}
where $\pi^{\replace{R}{i}{\pref}}_\can$ stands for a permutation of alternatives that (together with a potential renaming of alternatives) leads from ${\replace{R}{i}{\pref}}$ to $({\replace{R}{i}{\pref}})_\can$. The inequality $\scriptstyle(*)$ can be replaced by the equality $\scriptstyle(**)$ since the case of at least one strict inequality is captured by the corresponding disjunctive condition one line above.

\subsubsection{Encoding Efficiency} 
While \Cref{lem:sdEquiv} helps to formulate efficiency as an SMT axiom it is not yet sufficient because a quantification over the set of all lotteries $\Delta(A)$ remains. 
In order to get rid of this quantifier, we apply two lemmas by \citet{ABB14b}, for which we include (slightly simplified) proofs in favor of a self-contained presentation.  
The first lemma states that efficiency of a lottery only depends on its support. 
The second lemma shows that deciding whether a lottery is efficient reduces to solving a linear program. 

\begin{lemma}[\citeauthor{ABB14b}, \citeyear{ABB14b}]\label{lem:support}
	Let $R\in\mathcal{R}^N$.  
	A lottery $p\in\Delta(A)$ is efficient iff every lottery $p'\in\Delta(A)$ with $\supp{p'}\subseteq\supp{p}$ is efficient. 
\end{lemma}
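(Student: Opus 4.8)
The plan is to run everything through the stochastic-dominance reformulation of efficiency provided by \Cref{lem:sdEquiv}: a lottery $p$ is efficient at a profile $R$ precisely when there is no lottery $q$ with $q\pref_i^{\sd} p$ for all $i\in N$ and $q\sPref_i^{\sd} p$ for some $i\in N$ (i.e.\ $p$ is Pareto-efficient with respect to the relations $\pref_i^{\sd}$). Using this characterization, the right-to-left implication is immediate, since $p$ itself is one of the lotteries $p'$ with $\supp{p'}\subseteq\supp{p}$. So the real content is the left-to-right implication, which I would establish by contraposition: assuming some $p'\in\Delta(A)$ with $\supp{p'}\subseteq\supp{p}$ is \emph{not} efficient, I construct a lottery that $\sd$-dominates $p$.

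Concretely, let $q'\in\Delta(A)$ witness the inefficiency of $p'$, so $q'\pref_i^{\sd} p'$ for all $i\in N$ and $q'\sPref_i^{\sd} p'$ for some $i\in N$. Because $\supp{p'}\subseteq\supp{p}$ and $A$ is finite, I can pick $\varepsilon\in(0,1)$ small enough that $\varepsilon\, p'(x)\le p(x)$ for every $x\in A$ (any $\varepsilon$ strictly below $1$ and no larger than $\min\{\,p(x)/p'(x) : x\in\supp{p'}\,\}$ works; this is exactly where the support hypothesis enters). Then $r := \tfrac{1}{1-\varepsilon}\,(p-\varepsilon p')$ is a nonnegative vector with $\sum_{x\in A} r(x)=1$, hence $r\in\Delta(A)$, and by construction $p=\varepsilon p' + (1-\varepsilon)\,r$. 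Now set $q := \varepsilon q' + (1-\varepsilon)\,r\in\Delta(A)$.

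The key observation is that $q-p=\varepsilon\,(q'-p')$. Since for each agent $i$ and alternative $x$ the quantity $\sum_{y\pref_i x}(\cdot)(y)$ is linear in the lottery and $\varepsilon>0$, we get $\sum_{y\pref_i x} q(y) - \sum_{y\pref_i x} p(y) = \varepsilon\bigl(\sum_{y\pref_i x} q'(y) - \sum_{y\pref_i x} p'(y)\bigr)$ for all $i\in N$ and $x\in A$; this is $\ge 0$ because $q'\pref_i^{\sd} p'$, and strictly positive for the agent $i$ and alternative $x$ that witness $q'\sPref_i^{\sd} p'$. Hence $q\pref_i^{\sd} p$ for all $i\in N$ and $q\sPref_i^{\sd} p$ for some $i\in N$, so $p$ is not efficient, contradicting the hypothesis.

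I do not expect a genuine obstacle here: the only points that require care are choosing the admissible range for $\varepsilon$ (so that $r$ is a bona fide lottery, which is precisely what $\supp{p'}\subseteq\supp{p}$ buys us) and noting that the strict-inequality witness for $q'\sPref_i^{\sd} p'$ is preserved under multiplication by $\varepsilon>0$. The linearity of the $\sd$-comparisons in the lottery then does the rest, so the argument is short once the right convex combination is written down.
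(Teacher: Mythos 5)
Your proof is correct and takes essentially the same approach as the paper: both arguments perturb $p$ by a small multiple of $q'-p'$ (your $q=\varepsilon q'+(1-\varepsilon)r$ is exactly the paper's $q=p+\epsilon(q'-p')$), with the support inclusion guaranteeing feasibility and linearity transferring the domination. The only cosmetic difference is that you verify domination via the stochastic-dominance sums of \Cref{lem:sdEquiv} while the paper works directly with the linearity of the utility functions $u_i$.
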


\begin{proof}
	We prove the statement by contraposition: if $p'\in\Delta(A)$ is not efficient, then no lottery $p$ with $\supp{p'}\subseteq\supp{p}$ is efficient. 
	If $p'$ is not efficient, there is $q'\in\Delta(A)$ such that $q'$ $u$-dominates $p'$ for all utility representations $u^R$, \ie for all agents $i\in N$ and all utility functions $u_i$ consistent with $\succsim_i$, $u_i(q') - u_i(p')\ge 0$ and $u_{i'}(q') - u_{i'}(p') > 0$ for some agent $i'\in N$ and all utility functions $u_{i'}$ consistent with $\succsim_{i'}$. 
	Let $v = q' - p'\in\mathbb{R}^{A}$. 
	Note that, for all $x\in A$, $v(x) < 0$ implies $x\in\supp{p'}$. 
	Now let $\epsilon > 0$ small enough such that $q = p + \epsilon v\in\Delta(A)$. 
	This is possible because $\supp{p'}\subseteq\supp{p}$. 
	By definition of $q$ and linearity of $u_i$, we have that, for all $i\in N$ and all $u_i$ consistent with $\succsim_i$, $u_i(q) - u_i(p) = \epsilon u_i(v) = \epsilon (u_i(q') - u_i(p'))\ge 0$ and $u_{i'}(q) - u_{i'}(p) > 0$ for all $u_{i'}$ consistent with $\succsim_{i'}$. 
	Thus, $p$ is not efficient, contradicting the assumption.
\end{proof}

\begin{lemma}[\citeauthor{ABB14b}, \citeyear{ABB14b}]\label{lem:lp}
	Whether a lottery $p\in\Delta(A)$ is efficient for a given preference profile $R$ can be computed in polynomial time by solving a linear program. 	
\end{lemma}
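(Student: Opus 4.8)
The plan is to express the existence of a lottery that $\sd$-dominates $p$ as a linear program, and to read off efficiency of $p$ from the optimal value of that program. By the pointwise version of \Cref{lem:sdEquiv}, the lottery $p$ is efficient at $R$ if and only if there is no $q\in\Delta(A)$ with $q\pref_i^{\sd} p$ for all $i\in N$ and $q\sPref_i^{\sd} p$ for some $i\in N$. Unfolding the definition of $\sd$, this is equivalent to the nonexistence of a $q\in\Delta(A)$ with $\sum_{y\pref_i x} q(y)\ge\sum_{y\pref_i x} p(y)$ for every $i\in N$ and $x\in A$, where at least one of these inequalities is strict.

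Next I would turn the ``at least one strict inequality'' requirement into an objective so that a single LP suffices. Introduce real variables $q(x)$ for $x\in A$ together with slack variables $\delta_{i,x}\ge 0$ for $i\in N$ and $x\in A$, and consider the linear program that maximizes $\sum_{i\in N}\sum_{x\in A}\delta_{i,x}$ subject to $q(x)\ge 0$ for all $x\in A$, $\sum_{x\in A} q(x)=1$, and $\sum_{y\pref_i x} q(y)=\sum_{y\pref_i x} p(y)+\delta_{i,x}$ for all $i\in N$ and $x\in A$. This program is feasible (take $q=p$ and all $\delta_{i,x}=0$) and bounded (each $\delta_{i,x}\le 1$), so it attains an optimum, which is at least $0$.

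The crucial claim is then that $p$ is efficient at $R$ if and only if this optimal value equals $0$. If the optimum is positive, an optimal solution yields a lottery $q$ with $\sum_{y\pref_i x} q(y)\ge\sum_{y\pref_i x} p(y)$ for all $i$ and $x$ and strict inequality for some pair $(i_0,x_0)$; hence $q\pref_i^{\sd} p$ for all $i\in N$ and $q\sPref_{i_0}^{\sd} p$, so $p$ is not efficient. Conversely, if $p$ is not efficient, choose $q\in\Delta(A)$ with $q\pref_i^{\sd} p$ for all $i$ and $q\sPref_i^{\sd} p$ for some $i$; setting $\delta_{i,x}=\sum_{y\pref_i x}(q(y)-p(y))$ gives a feasible point with strictly positive objective, so the optimum is positive. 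Finally, the program has $O(mn)$ variables and $O(mn)$ constraints with rational data of polynomial bit-length, hence is solvable in polynomial time (e.g.\ by the ellipsoid method), and checking whether the optimum is $0$ decides efficiency of $p$.

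I do not expect a genuine obstacle; the one point deserving care is the reduction of the strictness condition to a single optimization---verifying that maximizing the sum of the slacks correctly detects whether \emph{some} dominating lottery exists, rather than needing one linear program per agent or per threshold alternative. Everything else is the routine observation that $\sd$-dominance of $p$ by $q$ is a system of linear inequalities in the unknown lottery $q$, with coefficients determined by $R$ and right-hand sides determined by $p$.
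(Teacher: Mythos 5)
Your proof is correct and is essentially the paper's own argument: after eliminating your slack variables $\delta_{i,x}=\sum_{y\succsim_i x}(q(y)-p(y))$, your program is exactly the paper's LP, which maximizes $\sum_{i\in N}\sum_{x\in A}\sum_{y\succsim_i x}(q_y-p_y)$ subject to $q$ being a lottery that weakly $\sd$-dominates $p$ for every agent, with efficiency holding iff the optimum is zero. The justification via \Cref{lem:sdEquiv} and the feasibility/boundedness observations match the paper's (terser) reasoning.
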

\begin{proof}
	Given the equivalence from \Cref{lem:sdEquiv}, a lottery $p$ is easily seen to be efficient iff the optimal objective value of the following linear program is zero (since then there is no lottery $q$ that $\sd$-dominates $p$): 
	\begin{align*}
		\max_{q} & \sum_{i \in N}\sum_{x \in A}\sum_{y \pref_{i} x}  q_{y}-p_{y}\\
		\text{subject to} & \sum_{y \pref_{i} x} q_{y} \ge \sum_{y \pref_{i} x} p_{y} \text{ for all } x \in A \text{, } i \in N\text, \notag\\
		&\sum_{x \in A} q_{x} = 1\text{,\,\,\,} q_x \geq 0 \text{ for all } x \in A \text.\notag\\
	\end{align*}%
\end{proof}

Recall that an SDS is efficient if it never returns a dominated lottery. 
By \Cref{lem:support}, this is equivalent to never returning a lottery with \emph{inefficient support}. 
To capture this, we encode, for each (canonical) preference profile $R\in\mathcal{R}^N_\can$, that the probability for at least one alternative in every (inclusion-minimal) inefficient support $I_{R}\subseteq A$ is zero: 
\begin{equation}\label{cond:eff}
	\bigvee_{x\in I_{R}} p_{{R},x} = 0\text. \tag{Efficiency}
\end{equation}
Given a preference profile $R$ and a support $I_R$, it can be decided in polynomial time whether $I_R$ is inefficient by checking for an arbitrary lottery with support $I_R$ whether it is efficient and then applying \Cref{lem:support,lem:lp}. The set of inclusion-minimal inefficient supports can be found by iterating over all supports. For a small number of alternatives this is feasible even though the number of possible supports is exponential in the number of alternatives.

\subsection{Restricted Domains}
\label{sec:domains}

Since $\rsd$ is known to satisfy both strategyproofness and efficiency when there are only up to $3$~alternatives or only up to $3$~agents and $5$~alternatives \citep{ABBB15a}, 
the search for an impossibility has to start at $m=4$~alternatives and $n=4$~agents. 
For these parameters, an encoding of the full domain, unfortunately, becomes prohibitively large.  
Hence, for $m=4$ and $n=4$, one has to carefully optimize the domain under consideration, on the one hand, to include a sufficient number of profiles for a successful proof, and, on the other hand, not to include too many profiles, which would prevent the solver from terminating within a reasonable amount of time. 

The following incremental strategy was found to be successful. 
We start with a specific profile $R$, 
from which we only consider sequences of potential manipulations as long as (in each step) the manipulated individual preferences are not too distinct from the truthful preferences. 
To this end, we measure the magnitude of manipulations by the Kendall tau distance $\tau$, which counts pairwise disagreements between $R_i$ and $R'_i$ \citep[see also][]{Sato13a}. 
A change in the individual preferences of an agent will be called a \emph{$k$-manipulation} if $\tau(R_i,R'_i)\leq k$. 
Then, for example, strategically swapping two alternatives is a $2$-manipulation, and breaking or introducing a tie between two alternatives is a $1$-manipulation. 

On the domain which starts from the preference profile $R$ given in \exref{ex:eff} and from there allows sequences of $(1, 2, 1, 2)$-manipulations, we were able to prove the result within a few minutes of running-time.\footnote{I.e., first we allow any $1$-manipulation from $R$, then, from every resulting profile, any $2$-manipulation is allowed (not necessarily by the same agent), and so forth. Showing the result on this domain implies a slightly stronger statement where strategyproofness only has to hold for ``small'' lies (of at most Kendall tau distance 2).}\footnote{The SMT solver \textsf{MathSAT} \citep{CGSS13a} terminates quickly within less than 3~minutes with the suggested competition settings, whereas \textsf{z3} \citep{MoBj08a} requires some additional configuration, but then also supports core extraction within the same time frame.} On smaller domains (\eg when considering $(1, 2, 2)$-manipulations from $R$), the axioms are still compatible. 

\subsection{Verification of Correctness}

The main drawbacks of the SMT-based proof are that \emph{(i)} one must trust the correctness of the SMT solver, \emph{(ii)} one must trust the correctness of the program that performs the encoding into SMT-LIB, and \emph{(iii)} the proof is unstructured and completely unlike a hand-written mathematical argument, which makes it virtually impossible to be checked by humans.

In order to tackle the first issue, we used \textsf{z3} to generate a \emph{minimal unsatisfiable set (MUS)} of constraints, \ie an inclusion-minimal set of constraints such that this set is still unsatisfiable \citep[see, also,][]{BrGe15a}. A MUS corresponding to \Cref{thm:sdsConj} consists of 94~constraints, not counting the (trivial) lottery definitions. This MUS, annotated with \eg the 47~required canonical preference profiles, is available as part of an arXiv version of this paper \citep{BBG16a}. The unsatisfiability of the MUS has been verified by the solvers \textsf{CVC4}, \textsf{MathSAT}, \textsf{Yices2}, and \textsf{z3}. 

We addressed the second issue by performing several sanity checks such as running solvers on multiple variants of the encoding which represent known theorems. This way, we reproduced (amongst others) the results by \citet{BoMo01a} and \citet{KaSe06a}, as well as the possibility result for $m<4$.

To finally remove any doubt about correctness and simultaneously address the third issue, we translated the MUS into an independent proof, which no longer relies on SMT, within the interactive theorem prover \isabellehol \citep{NPW02a,NiKl14a}.  
\textsf{Isabelle} is a generic interactive theorem prover where \emph{interactive} means that the prover does not find the proof by itself like an automated theorem prover---the user must give it a sequence of steps to follow and the prover's automation fills in the gaps. This allows proofs of more complex theorems that are outside the scope of fully-automated theorem provers. 
The proof of \Cref{thm:sdsConj} in \textsf{Isabelle} is about 400 lines long, but still fairly legible since it consists of many individual small proofs. The fact that \textsf{Isabelle} can automatically simplify inequalities using all facts proven so far actually makes conducting the proof in the system much easier, less tedious, and less error-prone than on paper. Moreover, \emph{all} aspects of the proof---including formal definitions of the social-choice-theoretic concepts, the reduction of the general case to that of $m = 4$ and $n = 4$, the generation of the constraints arising from the 47 canonical preference profiles, and the proof of the inconsistency of these constraints (which corresponds to the SMT proof)---have been verified by \isabellehol.

The trustworthiness of such a proof stems from the fact that all \textsf{Isabelle} proofs are broken down into small logical inference steps, which are checked by \textsf{Isabelle}'s kernel. Since only the kernel can produce new theorems, it is sufficient to trust it to correctly implement these inference steps to trust that any proof it accepts really does hold in the underlying logic. Furthermore, the mere act of breaking down proofs into such small steps exposes many mistakes and forgotten side conditions.

The \textsf{Isabelle} proof is available in the \emph{Archive of Formal Proofs} \citep{Eber16c}, which is a peer-reviewed online repository of \textsf{Isabelle} proofs. For more details on the background in \textsf{Isabelle} and how the proof was obtained from the MUS, we refer to \cite{Eber16a}.  A human-readable version of this proof is given in \appref{app:proof}.

\begin{table}[tb]
\begin{center}
\begin{tabular}{lr}
\toprule
Statement & Number of canonical preference profiles\\ 
\midrule
\Cref{thm:sdsConj} & 47 \\[1ex]
\citet[][Theorem 1]{BBS15b} & 13 \\ 
\citet[][Theorem 3]{ABB13d} & 10 \\ 
\citet[][Theorem 2]{ABB13d} & 7 \\ 
\citet[][Theorem 4]{ABB13d} & 7 \\ 
\citet[][Theorem 1]{ABBH12a} & 5 \\ 
\midrule
\citet[][Theorem 2]{BoMo01a} & 11\\
\citet[][Theorem 1]{Kasa12a} & 9\\ 
\citet[][Theorem 2]{Nest16a} & 8\\
\citet[][Theorem 1]{Nest16a} & 6\\
\citet[][Theorem 1]{Zhou90a} & 5\\
\citet[][Theorem 1]{AzKa17a} & 4\\ 
\citet[][Theorem 2]{AzKa17a} & 3\\ 
\citet[][Theorem 1]{Koji09a} & 3\\ 
\citet[][Section 4]{KaSe06a} & 2\\
\citet[][Theorem 1]{Nest16a} & 2\\ 
\bottomrule
\end{tabular}
\end{center}
\caption{Proof complexity comparison of impossibility statements using efficiency and strategyproofness in terms of the number of canonical preference profiles used in the proof. The statements in the lower part of the table have been proven for the assignment domain.}
\label{tab:profiles}
\end{table}

\section{Conclusion}
In this paper, we have leveraged computer-aided solving techniques to prove a sweeping impossibility for randomized aggregation mechanisms. In particular, we have reduced the statement to a finite propositional formula using linear arithmetic, which was then shown to be unsatisfiable by an SMT solver. A crucial step in the construction of the formula was to find a restricted domain of preference profiles that is not too large yet sufficient for the impossibility to hold.

It seems unlikely that this proof would have been found without the help of computers because manual proofs of significantly weaker statements already turned out to be quite complex (see \tabref{tab:profiles} for a comparison of the proof complexity of related statements).
Nevertheless, now that the theorem has been established, our computer-aided methods may guide the search for related, perhaps even stronger statements that allow for more intuitive proofs and provide more insights into randomized social choice.

Generally speaking, we believe that SMT solving and subsequent verification via \textsf{Isabelle} is applicable to a wide range of problems in social choice and, more generally, in microeconomic theory \citep[see][]{GePe17a}. 
In particular, extending our result to the special domain of assignment (see \Cref{sec:assign}) is desirable as this would strengthen a number of existing theorems. 
Other interesting questions are whether the impossibility still holds when weakening efficiency and strategyproofness even further or when omitting neutrality \citep[see][]{Bran17a}.

\begin{acks}
This material is based upon work supported by the Deutsche Forschungsgemeinschaft under grants {BR~2312/7-2} and {BR~2312/10-1} and the TUM Institute for Advanced Study through a Hans Fischer Senior Fellowship.
The authors also thank Alberto Griggio and Mohammad Mehdi Pourhashem Kallehbasti for guidance on how to most effectively use \textsf{MathSAT} and \textsf{z3}, respectively, and the anonymous reviewers for their helpful comments. 

Results of this paper were presented at the 6th International Workshop on Computational Social Choice
(Toulouse, June 2016) and the 25th International Joint Conference on Artificial Intelligence (New York, July 2016).
\end{acks}

\newpage
\begin{appendix}

\section{Proof of Theorem~\ref{thm:sdsConj}}
\label{app:proof}

\subsection{Main Proof}

We will now give the complete human-readable proof of \Cref{thm:sdsConj}. This proof is essentially a paraphrased version of the formal \isabellehol proof, which is available in the AFP entry \citep{Eber16c}.

Our general approach will be to attempt to \enquote{solve} preference profiles, \ie determine the exact value of $f(R_i)(x)$ (which we write as $p_{i,x}$) for a profile $R_i$ and an alternative $x$. Whenever this is not possible, we try to express $p_{i,x}$ in terms of other $p_{j,y}$ or at least find simple inequalities that the $p_{i,x}$ satisfy. We do this until we have gained enough knowledge about the SDS to derive a contradiction.

A typical step in the proofs will be to pick a strategyproofness condition (which usually consist of several disjunctions) and simplify it with all the knowledge that we have---substituting the $p_{i,x}$ whose values we already know, \eg substituting $p_{i,d} = 1 - p_{i,a}$ if we know that $p_{i,b} = p_{i,c} = 0$. We will use the fact that all $p_{i,x}$ are non-negative and that $\sum_{x\in A} p_{i,x} = 1$ without mentioning it explicitly.

Every step of the proof (\ie \enquote{Condition X simplifies to \ldots} or \enquote{Condition X implies \ldots}) is elementary in the sense that it can by solved automatically by \textsf{Isabelle}---in fact, the proof printed here is often considerably more verbose and with more intermediate steps than would be necessary in \textsf{Isabelle}. Still, for a human, most of these steps will require a few steps of reasoning on paper. We chose not to go into more detail of the individual steps, since it would only have made the proof even longer and less readable.

The proof will reference orbit equations, efficiency conditions, and strategyproofness conditions on the set of 47 preference profiles mentioned before. 
As an aid to the reader, the proof contains tables listing all the knowledge that we currently have about the probabilities of the lottery returned by the hypothetical SDS after every few steps.\\

We start by listing the 47 preference profiles used in the proof by giving the weak rankings of each agent.
\begin{center}
\begingroup
\begin{longtable}{ccccc}
\toprule
\hspace*{2mm}Profile\hspace*{2mm} &\hspace*{4mm} Agent 1 \hspace*{4mm}& \hspace*{4mm}Agent 2\hspace*{4mm} & \hspace*{4mm}Agent 3 \hspace*{4mm}&\hspace*{4mm} Agent 4\hspace*{4mm}\\\midrule
$R_{1}$ & $ \{c, d\}, \{a, b\}$ & $\{b, d\}, a, c$ & $a, b, \{c, d\}$ & $\{a, c\}, \{b, d\}$\\
$R_{2}$ & $ \{a, c\}, \{b, d\}$ & $\{c, d\}, a, b$ & $\{b, d\}, a, c$ & $a, b, \{c, d\}$\\
$R_{3}$ & $ \{a, b\}, \{c, d\}$ & $\{c, d\}, \{a, b\}$ & $d, \{a, b\}, c$ & $c, a, \{b, d\}$\\
$R_{4}$ & $ \{a, b\}, \{c, d\}$ & $\{a, d\}, \{b, c\}$ & $c, \{a, b\}, d$ & $d, c, \{a, b\}$\\
$R_{5}$ & $ \{c, d\}, \{a, b\}$ & $\{a, b\}, \{c, d\}$ & $\{a, c\}, d, b$ & $d, \{a, b\}, c$\\
$R_{6}$ & $ \{a, b\}, \{c, d\}$ & $\{c, d\}, \{a, b\}$ & $\{a, c\}, \{b, d\}$ & $d, b, a, c$\\
$R_{7}$ & $ \{a, b\}, \{c, d\}$ & $\{c, d\}, \{a, b\}$ & $a, c, d, b$ & $d, \{a, b\}, c$\\
$R_{8}$ & $ \{a, b\}, \{c, d\}$ & $\{a, c\}, \{b, d\}$ & $d, \{a, b\}, c$ & $d, c, \{a, b\}$\\
$R_{9}$ & $ \{a, b\}, \{c, d\}$ & $\{a, d\}, c, b$ & $d, c, \{a, b\}$ & $\{a, b, c\}, d$\\
$R_{10}$ & $ \{a, b\}, \{c, d\}$ & $\{c, d\}, \{a, b\}$ & $\{a, c\}, d, b$ & $\{b, d\}, a, c$\\
$R_{11}$ & $ \{a, b\}, \{c, d\}$ & $\{c, d\}, \{a, b\}$ & $d, \{a, b\}, c$ & $c, a, b, d$\\
$R_{12}$ & $ \{c, d\}, \{a, b\}$ & $\{a, b\}, \{c, d\}$ & $\{a, c\}, d, b$ & $\{a, b, d\}, c$\\
$R_{13}$ & $ \{a, c\}, \{b, d\}$ & $\{c, d\}, a, b$ & $\{b, d\}, a, c$ & $a, b, d, c$\\
$R_{14}$ & $ \{a, b\}, \{c, d\}$ & $d, c, \{a, b\}$ & $\{a, b, c\}, d$ & $a, d, c, b$\\
$R_{15}$ & $ \{a, b\}, \{c, d\}$ & $\{c, d\}, \{a, b\}$ & $\{b, d\}, a, c$ & $a, c, d, b$\\
$R_{16}$ & $ \{a, b\}, \{c, d\}$ & $\{c, d\}, \{a, b\}$ & $a, c, d, b$ & $\{a, b, d\}, c$\\
$R_{17}$ & $ \{a, b\}, \{c, d\}$ & $\{c, d\}, \{a, b\}$ & $\{a, c\}, \{b, d\}$ & $d, \{a, b\}, c$\\
$R_{18}$ & $ \{a, b\}, \{c, d\}$ & $\{a, d\}, \{b, c\}$ & $\{a, b, c\}, d$ & $d, c, \{a, b\}$\\
$R_{19}$ & $ \{a, b\}, \{c, d\}$ & $\{c, d\}, \{a, b\}$ & $\{b, d\}, a, c$ & $\{a, c\}, \{b, d\}$\\
$R_{20}$ & $ \{b, d\}, a, c$ & $b, a, \{c, d\}$ & $a, c, \{b, d\}$ & $d, c, \{a, b\}$\\
$R_{21}$ & $ \{a, d\}, c, b$ & $d, c, \{a, b\}$ & $c, \{a, b\}, d$ & $a, b, \{c, d\}$\\
$R_{22}$ & $ \{a, c\}, d, b$ & $d, c, \{a, b\}$ & $d, \{a, b\}, c$ & $a, b, \{c, d\}$\\
$R_{23}$ & $ \{a, b\}, \{c, d\}$ & $\{c, d\}, \{a, b\}$ & $\{a, c\}, \{b, d\}$ & $\{a, b, d\}, c$\\
$R_{24}$ & $ \{c, d\}, \{a, b\}$ & $d, b, a, c$ & $c, a, \{b, d\}$ & $b, a, \{c, d\}$\\
$R_{25}$ & $ \{c, d\}, \{a, b\}$ & $\{b, d\}, a, c$ & $a, b, \{c, d\}$ & $a, c, \{b, d\}$\\
$R_{26}$ & $ \{b, d\}, \{a, c\}$ & $\{c, d\}, \{a, b\}$ & $a, b, \{c, d\}$ & $a, c, \{b, d\}$\\
$R_{27}$ & $ \{a, b\}, \{c, d\}$ & $\{b, d\}, a, c$ & $\{a, c\}, \{b, d\}$ & $\{c, d\}, a, b$\\
$R_{28}$ & $ \{c, d\}, a, b$ & $\{b, d\}, a, c$ & $a, b, \{c, d\}$ & $a, c, \{b, d\}$\\
$R_{29}$ & $ \{a, c\}, d, b$ & $\{b, d\}, a, c$ & $a, b, \{c, d\}$ & $d, c, \{a, b\}$\\
$R_{30}$ & $ \{a, d\}, c, b$ & $d, c, \{a, b\}$ & $c, \{a, b\}, d$ & $\{a, b\}, d, c$\\
$R_{31}$ & $ \{b, d\}, a, c$ & $\{a, c\}, d, b$ & $c, d, \{a, b\}$ & $\{a, b\}, c, d$\\
$R_{32}$ & $ \{a, c\}, d, b$ & $d, c, \{a, b\}$ & $d, \{a, b\}, c$ & $\{a, b\}, d, c$\\
$R_{33}$ & $ \{c, d\}, \{a, b\}$ & $\{a, c\}, d, b$ & $a, b, \{c, d\}$ & $d, \{a, b\}, c$\\
$R_{34}$ & $ \{a, b\}, \{c, d\}$ & $a, c, d, b$ & $b, \{a, d\}, c$ & $c, d, \{a, b\}$\\
$R_{35}$ & $ \{a, d\}, c, b$ & $a, b, \{c, d\}$ & $\{a, b, c\}, d$ & $d, c, \{a, b\}$\\
$R_{36}$ & $ \{c, d\}, \{a, b\}$ & $\{a, c\}, d, b$ & $\{b, d\}, a, c$ & $a, b, \{c, d\}$\\
$R_{37}$ & $ \{a, c\}, \{b, d\}$ & $\{b, d\}, \{a, c\}$ & $a, b, \{c, d\}$ & $c, d, \{a, b\}$\\
$R_{38}$ & $ \{c, d\}, a, b$ & $\{b, d\}, a, c$ & $a, b, \{c, d\}$ & $\{a, c\}, b, d$\\
$R_{39}$ & $ \{a, c\}, d, b$ & $\{b, d\}, a, c$ & $a, b, \{c, d\}$ & $\{c, d\}, a, b$\\
$R_{40}$ & $ \{a, d\}, c, b$ & $\{a, b\}, c, d$ & $\{a, b, c\}, d$ & $d, c, \{a, b\}$\\
$R_{41}$ & $ \{a, d\}, c, b$ & $\{a, b\}, d, c$ & $\{a, b, c\}, d$ & $d, c, \{a, b\}$\\
$R_{42}$ & $ \{c, d\}, \{a, b\}$ & $\{a, b\}, \{c, d\}$ & $d, b, a, c$ & $c, a, \{b, d\}$\\
$R_{43}$ & $ \{a, b\}, \{c, d\}$ & $\{c, d\}, \{a, b\}$ & $d, \{a, b\}, c$ & $a, \{c, d\}, b$\\
$R_{44}$ & $ \{c, d\}, \{a, b\}$ & $\{a, c\}, d, b$ & $\{a, b\}, d, c$ & $\{a, b, d\}, c$\\
$R_{45}$ & $ \{a, c\}, d, b$ & $\{b, d\}, a, c$ & $\{a, b\}, c, d$ & $\{c, d\}, b, a$\\
$R_{46}$ & $ \{b, d\}, a, c$ & $d, c, \{a, b\}$ & $\{a, c\}, \{b, d\}$ & $b, a, \{c, d\}$\\
$R_{47}$ & $ \{a, b\}, \{c, d\}$ & $\{a, d\}, c, b$ & $d, c, \{a, b\}$ & $c, \{a, b\}, d$\\\bottomrule
\caption{The 47 preference profiles used in the proof.}
\label{tbl_profiles}
\end{longtable}
\endgroup
\end{center}

Now, to begin with the proof, we shall first focus on those profiles that have rich symmetries (\ie orbit conditions) and restrictive efficiency conditions (\eg by admitting Pareto dominated alternatives).

\Cref{tbl_automorphisms} lists profile automorphisms, \ie permutations of the alternatives such that applying the permutation to the profile yields a profile that is anonymity-equivalent to the original profile. Given such a profile, an anonymous and neutral SDS must return the same probability for all alternatives contained in the same orbit of the permutation.
To increase readability, the permutations are already written as a product of their orbits; for instance, the first orbit condition states that $p_{10,a} = p_{10,d}$ and $p_{10,b} = p_{10,c}$.

\begin{center}
\begin{longtable}{cc}
\toprule
Profile & Permutation\\\midrule
$R_{10}$ & $(a\ d)(b\ c)$\\
$R_{26}$ & $(a)(b\ c)(d)$\\
$R_{27}$ & $(a)(b\ c)(d)$\\
$R_{28}$ & $(a)(b\ c)(d)$\\
$R_{29}$ & $(a\ d)(b\ c)$\\
$R_{43}$ & $(a\ d)(b\ c)$\\
$R_{45}$ & $(a\ b\ d\ c)$\\\bottomrule
\caption{The relevant profile automorphisms, written as a product of their orbits.}
\label{tbl_automorphisms}
\end{longtable}
\end{center}

There are efficiency conditions of two different types: those derived from \emph{ex post} efficiency alone assert that Pareto dominated alternatives have to be assigned probability $0$, whereas those derived from $\sd$-efficiency (but not \emph{ex post} efficiency) assert that at least one of two alternatives has to be assigned probability $0$.

Alternative $b$ is Pareto dominated in the following profiles and must therefore be assigned probability $0$ by any \emph{ex post} efficient SDS (and thereby also by any \sd-efficient SDS):
\begin{displayquote}
$R_{3}$, $R_{4}$, $R_{5}$, $R_{7}$, $R_{8}$, $R_{9}$, $R_{11}$, $R_{12}$, $R_{14}$, $R_{16}$, $R_{17}$, $R_{18}$, $R_{21}$, $R_{22}$, $R_{23}$, $R_{30}$, $R_{32}$, $R_{33}$, $R_{35}$, $R_{40}$, $R_{41}$, $R_{43}$, $R_{44}$, $R_{47}$
\end{displayquote}
We will use the fact that $f(R)(b) = 0$ for all of these profiles without mentioning it explicitly.

Moreover, $\{b,c\}$ is an \sd-inefficient support in the following profiles (\ie any \sd-efficient SDS must assign probability $0$ to at least one of $b$ and $c$):
\begin{displayquote}
$R_{10}$, $R_{15}$, $R_{19}$, $R_{25}$, $R_{26}$, $R_{27}$, $R_{28}$, $R_{29}$, $R_{39}$
\end{displayquote}
To see that this is true, note that the lottery $\nicefrac{1}{2}\,a + \nicefrac{1}{2}\,d$ strictly Pareto dominates the lottery $\nicefrac{1}{2}\,b + \nicefrac{1}{2}\,c$ for each of these profiles.

Using the orbit and efficiency conditions we arrive at the following conclusions:
\begin{itemize}
\item The orbit conditions of $R_{45}$ imply $p_{45,a} = p_{45,b} = p_{45,c} = p_{45,d} = \nicefrac{1}{4}$. 
\item The efficiency conditions for $R_{10}$ state that at least one of $p_{10,b}$ and $p_{10,c}$ is $0$, and since the orbit conditions state that $p_{10,b} = p_{10,c}$, we have $p_{10,b} = p_{10,c} = 0$.
\item In the same fashion, we can show that $p_{i,x} = 0$ for $i\in \{26,27,28,29\}$ and $x\in\{b,c\}$. For $R_{29}$, the orbit condition then additionally implies $p_{29,a} = p_{29,d} = \nicefrac{1}{2}$, and analogously for $R_{10}$.
\item The efficiency conditions for $R_{43}$ state $p_{43,b} = p_{43,c} = 0$, and with the orbit condition $p_{43,a} = p_{43,d}$ we have $p_{43,a} = p_{43,d} = \nicefrac{1}{2}$.
\end{itemize}
In summary, we have now derived the following information about the profiles:
\begin{center}
\begin{tabular}{c|ccccccc}
& $R_{10}$ & $R_{26}$ & $R_{27}$ & $R_{28}$ & $R_{29}$ & $R_{43}$ & $R_{45}$\\\midrule
$a$ & $\nicefrac{1}{2}$ & & & & $\nicefrac{1}{2}$ & $\nicefrac{1}{2}$ & $\nicefrac{1}{4}$\\
$b$ & $0$ & $0$ & $0$ & $0$ & $0$ & $0$ & $\nicefrac{1}{4}$\\
$c$ & $0$ & $0$ & $0$ & $0$ & $0$ & $0$ & $\nicefrac{1}{4}$\\
$d$ & $\nicefrac{1}{2}$ & & &  &  $\nicefrac{1}{2}$ & $\nicefrac{1}{2}$ & $\nicefrac{1}{4}$\\
\end{tabular}
\end{center}

\begin{itemize}
\item Suppose $p_{39,c} = 0$. Then $\eqref{S_29_39}$ implies $p_{39,d} \leq \nicefrac{1}{2}$ and $\eqref{S_39_29}$ then implies $p_{39,b} = 0$. Since the efficiency condition for $R_{39}$ states that $p_{39,b} = 0$ or $p_{39,c} = 0$, we can conclude that, in any case, $p_{39,b} = 0$.
\item Using this, $\eqref{S_39_29}$ now simplifies to $p_{39,a} \leq \nicefrac{1}{2}$.
\item \eqref{S_10_36} simplifies to $p_{36,a} + p_{36,b} \leq \nicefrac{1}{2}$. Using this, \eqref{S_36_10} simplifies to $p_{36,a} = \nicefrac{1}{2}$ and $p_{36,b} = 0$.
\item \eqref{S_36_39} simplifies to $p_{39,a} \geq \nicefrac{1}{2}$. Using this, \eqref{S_39_36} simplifies to $p_{39,a} = \nicefrac{1}{2}$.
\item \eqref{S_12_10} simplifies to $p_{12,a} + p_{12,d} \geq 1$, which implies $p_{12,c} = 0$.
\item \eqref{S_10_12} then simplifies to $p_{12,a} \geq \nicefrac{1}{2}$.
\item $\eqref{S_12_44}$ simplifies to $p_{44,a} \leq p_{12,a}$. Using this, $\eqref{S_44_12}$ simplifies to $p_{44,a} = p_{12,a}$ and $p_{44,c} = 0$.
\item $\eqref{S_9_35}$ simplifies to $p_{35,a} \leq p_{9,a}$, and then $\eqref{S_35_9}$ simplifies to $p_{9,a} = p_{35,a}$.
\item $\eqref{S_9_18}$ states that $p_{9,a} + p_{9,d} \leq p_{18,a} + p_{18,d}$, and then $\eqref{S_9_18}$ simplifies to $p_{18,c} = p_{9,c}$.
\end{itemize}
To summarize:
\begin{center}
\begin{tabular}{c|ccccccccccccc}
& $R_{9}$ & $R_{10}$ & $R_{12}$ & $R_{18}$ & $R_{26}$ & $R_{27}$ & $R_{28}$ & $R_{29}$ & $R_{36}$ & $R_{39}$ & $R_{43}$ & $R_{44}$ & $R_{45}$\\\midrule
$a$ & $p_{35,a}$ & $\nicefrac{1}{2}$ & ${\geq}\,\nicefrac{1}{2}$ &  &  &  &  & $\nicefrac{1}{2}$ & $\nicefrac{1}{2}$ & $\nicefrac{1}{2}$ & $\nicefrac{1}{2}$ & $p_{12,a}$ & $\nicefrac{1}{4}$\\
$b$ & $0$ & $0$ & $0$ & $0$ & $0$ & $0$ & $0$ & $0$ & $0$ & $0$ & $0$ & $0$ & $\nicefrac{1}{4}$\\
$c$ &  & $0$ & $0$ & $p_{9,c}$ & $0$ & $0$ & $0$ & $0$ &  &  & $0$ & $0$ & $\nicefrac{1}{4}$\\
$d$ &  & $\nicefrac{1}{2}$ & ${\leq}\,\nicefrac{1}{2}$ &  &  &  &  &  & $\nicefrac{1}{2}$ &  & $\nicefrac{1}{2}$ & $1-p_{12,a}$ & $\nicefrac{1}{4}$\\
\end{tabular}
\end{center}

\begin{itemize}
\item \eqref{S_5_10} implies $p_{5,d} \geq \nicefrac{1}{2}$.
\item \eqref{S_5_17} implies $p_{5,d} \leq p_{17,d}$, and \eqref{S_17_7} simplifies to $p_{17,d} \leq p_{7,d}$. Combined with $p_{5,d} \geq \nicefrac{1}{2}$ from above, we have $p_{7,d} \geq \nicefrac{1}{2}$. Using this, \eqref{S_7_43} implies $p_{7,a} = \nicefrac{1}{2}$ and $p_{7,c} = 0$, and therefore $p_{7,d}=\nicefrac{1}{2}$.
\item \eqref{S_5_7} now simplifies to $p_{5,d} \leq \nicefrac{1}{2}$, and $p_{5,d} \geq \nicefrac{1}{2}$ was already shown, so we have $p_{5,d} = \nicefrac{1}{2}$.
\item \eqref{S_5_10} now simplifies to $p_{5,c} = 0$, and it is then clear that $p_{5,a} = \nicefrac{1}{2}$.
\item Suppose $p_{15,b} = 0$. Then \eqref{S_10_15} simplifies to $p_{15,a} + p_{15,c} \leq \nicefrac{1}{2}$ and, using that, \eqref{S_15_10} implies $p_{15,c} = 0$. Since the efficiency conditions for $R_{15}$ tell us that $p_{15,b} = 0$ or $p_{15,c} = 0$, we can conclude $p_{15,c} = 0$.
\item \eqref{S_15_5} then implies $p_{15,a} \geq \nicefrac{1}{2}$ and \eqref{S_15_7} implies $p_{15,a} \leq \nicefrac{1}{2}$. We can conclude that $p_{15,a} = \nicefrac{1}{2}$.
\item \eqref{S_15_5} now simplifies to $p_{15,d} = \nicefrac{1}{2}$ and $p_{15,b} = 0$.
\item \eqref{S_27_13} simplifies to $p_{13,a} + p_{13,b} \leq p_{27,a}$. Using that, \eqref{S_13_27} simplifies to $p_{13,b} = p_{13,c} = 0$ and $p_{27,a} = p_{13,a}$.
\item \eqref{S_15_13} now implies $p_{13,a} \geq \nicefrac{1}{2}$ and \eqref{S_13_15} simplifies to $p_{13,a} \leq \nicefrac{1}{2}$, so that we can conclude $p_{13,a} = p_{13,d} = p_{27,a} = p_{27,d} = \nicefrac{1}{2}$.
\end{itemize}
We summarize what we have learned so far:
\begin{center}
\begin{varwidth}{\textwidth}
\begin{tabular}{c|cccccccccccc}
& $R_{5}$ & $R_{7}$ & $R_{9}$ & $R_{10}$ & $R_{12}$ & $R_{13}$ & $R_{15}$ & $R_{18}$ & $R_{26}$ & $R_{27}$ & $R_{28}$ & $R_{29}$\\\midrule
$a$ & $\nicefrac{1}{2}$ & $\nicefrac{1}{2}$ & $p_{35,a}$ & $\nicefrac{1}{2}$ & ${\geq}\,\nicefrac{1}{2}$ & $\nicefrac{1}{2}$ & $\nicefrac{1}{2}$ &  &  & $\nicefrac{1}{2}$ &  & $\nicefrac{1}{2}$\\
$b$ & $0$ & $0$ & $0$ & $0$ & $0$ & $0$ & $0$ & $0$ & $0$ & $0$ & $0$ & $0$\\
$c$ & $0$ & $0$ &  & $0$ & $0$ & $0$ & $0$ & $p_{9,c}$ & $0$ & $0$ & $0$ & $0$\\
$d$ & $\nicefrac{1}{2}$ & $\nicefrac{1}{2}$ &  & $\nicefrac{1}{2}$ & ${\leq}\,\nicefrac{1}{2}$ & $\nicefrac{1}{2}$ & $\nicefrac{1}{2}$ &  &  & $\nicefrac{1}{2}$ &  & $\nicefrac{1}{2}$\\
\end{tabular}
\\[1em]
\begin{tabular}{c|ccccc}
& $R_{36}$ & $R_{39}$ & $R_{43}$ & $R_{44}$ & $R_{45}$\\\midrule
$a$ & $\nicefrac{1}{2}$ & $\nicefrac{1}{2}$ & $\nicefrac{1}{2}$ & $p_{12,a}$ & $\nicefrac{1}{4}$\\
$b$ & $0$ & $0$ & $0$ & $0$ & $\nicefrac{1}{4}$\\
$c$ &  &  & $0$ & $0$ & $\nicefrac{1}{4}$\\
$d$ &  &  & $\nicefrac{1}{2}$ & $1-p_{12,a}$ & $\nicefrac{1}{4}$\\
\end{tabular}
\end{varwidth}
\end{center}

\begin{itemize}
\item We will now determine the probabilities for $R_{19}$. The efficiency condition tells us that $p_{19,b}=0$ or $p_{19,c}=0$.
\begin{itemize}
\item Suppose $p_{19,b} = 0$. Then \eqref{S_10_19} simplifies to $p_{19,a} + p_{19,c} \leq \nicefrac{1}{2}$ and \eqref{S_19_10} simplifies to $p_{19,a} + p_{19,c} = \nicefrac{1}{2}$. We can therefore conclude that $p_{19,d} = \nicefrac{1}{2}$. Using this, \eqref{S_27_19} then simplifies to $p_{19,a} = \nicefrac{1}{2}$ and $p_{19,c} = 0$ and therefore $p_{19,d} = \nicefrac{1}{2}$.
\item Suppose $p_{19,c} = 0$. Then \eqref{S_19_10} simplifies to $p_{19,a} \geq \nicefrac{1}{2}$ and \eqref{S_19_27} simplifies to $p_{19,d} \geq \nicefrac{1}{2}$. This clearly implies $p_{19,a} = p_{19,d} = \nicefrac{1}{2}$ and $p_{19,b} = 0$.
\end{itemize}
\item Using this, \eqref{S_19_1} simplifies to $p_{1,a} + p_{1,b} \leq \nicefrac{1}{2}$, and with that, \eqref{S_1_19} simplifies to $p_{1,a} = \nicefrac{1}{2}$ and $p_{1,b} = 0$.
\item \eqref{S_33_5} simplifies to $p_{33,a} \geq \nicefrac{1}{2}$. Moreover, \eqref{S_33_22} simplifies to $p_{22,c} + p_{22,d} \leq p_{33,c}+p_{33,d}$, \ie $p_{33,a} \leq p_{22,a}$. We therefore have $p_{22,a} \geq \nicefrac{1}{2}$. Using this, \eqref{S_22_29} simplifies to $p_{22,a} = p_{22,d} = \nicefrac{1}{2}$ and therefore also $p_{22,c} = 0$.
\item \eqref{S_32_28} implies $p_{28,a} \leq p_{32,d}$. Then \eqref{S_28_32} implies $p_{32,d} = p_{28,a}$. Moreover, \eqref{S_22_32} simplifies to $p_{32,a} \leq 1$. Using these two facts, \eqref{S_32_22} implies $p_{32,d} = \nicefrac{1}{2}$ and therefore also $p_{28,a} = p_{28,d} = \nicefrac{1}{2}$.
\item \eqref{S_28_39} now simplifies to $p_{39,c} = 0$, and since we have already determined $p_{39,a} = \nicefrac{1}{2}$ and $p_{39,b} = 0$, we can conclude $p_{39,d} = \nicefrac{1}{2}$.
\item \eqref{S_1_2} states that $p_{2,c} + p_{2,d} \leq p_{1,c} + p_{2,d}$. Using this, \eqref{S_2_1} simplifies to $p_{2,a} = p_{2,c} + p_{2,d} = \nicefrac{1}{2}$ and therefore also $p_{2,b} = 0$. Using this, \eqref{S_39_2} simplifies to $p_{2,c} = 0$ and $p_{2,d} = \nicefrac{1}{2}$.
\item We will now determine $R_{42}$:
\begin{itemize}
\item \eqref{S_17_5} simplifies to $p_{17,a} + p_{17,c} \geq \nicefrac{1}{2}$ and \eqref{S_5_17} simplifies to $p_{17,a} + p_{17,c} \leq \nicefrac{1}{2}$, so we can conclude $p_{17,d} = \nicefrac{1}{2}$.
\item \eqref{S_6_42} states that $p_{42,a} + p_{42,c} \leq p_{6,a} + p_{6,c}$ and \eqref{S_6_19} implies $p_{6,a} + p_{6,c} \leq \nicefrac{1}{2}$. We can therefore conclude that $p_{42,a} + p_{42,c} \leq \nicefrac{1}{2}$.
\item \eqref{S_17_11} states that $p_{11,a} + p_{11,c} \leq p_{17,a} + p_{17,c}$. Since $p_{11,b} = p_{17,b} = 0$, this is equivalent to $p_{11,d} \geq p_{17,d} = \nicefrac{1}{2} \geq p_{42,a} + p_{42,c}$. With this, \eqref{S_42_11} implies $p_{42,c} \geq p_{11,d} \geq \nicefrac{1}{2}$.
\item \eqref{S_17_3} simplifies to $p_{3,a} + p_{3,c} \leq p_{17,a} + p_{17,c}$; \ie $p_{3,d} \geq p_{17,d} = \nicefrac{1}{2}$.
\item Finally, using $p_{42,c} \geq \nicefrac{1}{2}$ and $p_{3,d} \geq \nicefrac{1}{2}$, \eqref{S_42_3} simplifies to $p_{42,c} \geq\nicefrac{1}{2}$ and $p_{42,d} \geq \nicefrac{1}{2}$ and therefore $p_{42,a} = p_{42,b} = 0$ and $p_{42,c} = p_{42,d} = \nicefrac{1}{2}$.
\end{itemize}
\item Using these values for $R_{42}$, the two conditions \eqref{S_37_42_1} and \eqref{S_37_42_2} now simplify to $p_{37,a} = \nicefrac{1}{2}$ or $p_{37,a}+p_{37,b} > \nicefrac{1}{2}$, and $p_{37,c} = \nicefrac{1}{2}$ or $p_{37,c}+p_{37,d} > \nicefrac{1}{2}$. Together, these obviously imply $p_{37,a} = p_{37,c} = \nicefrac{1}{2}$ and $p_{37,b} = p_{37,d} = 0$.
\item Similarly, $R_{24}$ simplifies to $p_{24,a} + p_{24,b} \leq 0$ and therefore $p_{24,a} = p_{24,b} = 0$.
\end{itemize}

\begin{center}
\begin{varwidth}{\textwidth}
\begin{tabular}{c|ccccccccccccc}
& $R_{1}$ & $R_{2}$ & $R_{5}$ & $R_{7}$ & $R_{9}$ & $R_{10}$ & $R_{12}$ & $R_{13}$ & $R_{15}$ & $R_{18}$ & $R_{19}$ & $R_{22}$ & $R_{24}$\\\midrule
$a$ & $\nicefrac{1}{2}$ & $\nicefrac{1}{2}$ & $\nicefrac{1}{2}$ & $\nicefrac{1}{2}$ & $p_{35,a}$ & $\nicefrac{1}{2}$ & ${\geq}\,\nicefrac{1}{2}$ & $\nicefrac{1}{2}$ & $\nicefrac{1}{2}$ &  & $\nicefrac{1}{2}$ & $\nicefrac{1}{2}$ & $0$\\
$b$ & $0$ & $0$ & $0$ & $0$ & $0$ & $0$ & $0$ & $0$ & $0$ & $0$ & $0$ & $0$ & $0$\\
$c$ &  & $0$ & $0$ & $0$ &  & $0$ & $0$ & $0$ & $0$ & $p_{9,c}$ & $0$ & $0$ & \\
$d$ &  & $\nicefrac{1}{2}$ & $\nicefrac{1}{2}$ & $\nicefrac{1}{2}$ &  & $\nicefrac{1}{2}$ & ${\leq}\,\nicefrac{1}{2}$ & $\nicefrac{1}{2}$ & $\nicefrac{1}{2}$ &  & $\nicefrac{1}{2}$ & $\nicefrac{1}{2}$ & \\
\end{tabular}
\\[1em]
\begin{tabular}{c|ccccccccccc}
& $R_{26}$ & $R_{27}$ & $R_{28}$ & $R_{29}$ & $R_{36}$ & $R_{37}$ & $R_{39}$ & $R_{42}$ & $R_{43}$ & $R_{44}$ & $R_{45}$\\\midrule
$a$ &  & $\nicefrac{1}{2}$ & $\nicefrac{1}{2}$ & $\nicefrac{1}{2}$ & $\nicefrac{1}{2}$ & $\nicefrac{1}{2}$ & $\nicefrac{1}{2}$ & $0$ & $\nicefrac{1}{2}$ & $p_{12,a}$ & $\nicefrac{1}{4}$\\
$b$ & $0$ & $0$ & $0$ & $0$ & $0$ & $0$ & $0$ & $0$ & $0$ & $0$ & $\nicefrac{1}{4}$\\
$c$ & $0$ & $0$ & $0$ & $0$ &  & $\nicefrac{1}{2}$ & $0$ & $\nicefrac{1}{2}$ & $0$ & $0$ & $\nicefrac{1}{4}$\\
$d$ &  & $\nicefrac{1}{2}$ & $\nicefrac{1}{2}$ & $\nicefrac{1}{2}$ &  & $0$ & $\nicefrac{1}{2}$ & $\nicefrac{1}{2}$ & $\nicefrac{1}{2}$ & $1-p_{12,a}$ & $\nicefrac{1}{4}$\\
\end{tabular}
\end{varwidth}
\end{center}

\begin{itemize}
\item \eqref{S_24_34} implies $p_{34,b} \leq p_{24,c}$ and \eqref{S_34_24} implies $p_{24,c} \leq p_{34,b}$; we therefore have $p_{34,b} = p_{24,c}$. Using this, \eqref{S_34_24} simplifies to $p_{34,c} = 0$ and \eqref{S_24_34} simplifies to $p_{34,d} = 0$.
\item \eqref{S_14_34} now simplifies to $p_{14,a} + p_{14,c} \geq 1$, so we have $p_{14,b} = p_{14,d} = 0$.
\item \eqref{S_46_37} simplifies to $p_{46,a} = p_{46,c} = 0$.
\item \eqref{S_46_20} now simplifies to $p_{20,a} + p_{20,c} \leq 0$, so we have $p_{20,a} = p_{20,c} = 0$.
\item \eqref{S_20_21} now simplifies to $p_{21,b} = p_{21,c} = 0$.
\item \eqref{S_12_16} simplifies to $p_{16,a} + p_{16,c} \leq p_{12,a}$.
\item We now determine the probabilities for $p_{16,c}$:
\begin{itemize}
\item \eqref{S_44_40} simplifies to $p_{12,a} \leq p_{40,a}$. Moreover, \eqref{S_9_40} simplifies to $p_{40,a} \leq p_{9,a}$. Combined with $p_{16,a} + p_{16,c} \leq p_{12,a}$, this implies $p_{16,a} + p_{16,c} \leq p_{9,a}$.
\item \eqref{S_14_16} implies $p_{16,a} \geq p_{14,a}$.
\item Combining the last two facts, we obtain $p_{16,c} \leq p_{9,a} - p_{14,a}$. Moreover, \eqref{S_14_9} implies $p_{9,a} - p_{14,a} \leq 0$. Combining this, we have $p_{16,c} = 0$.
\end{itemize}
\item Therefore, the fact $p_{16,a} + p_{16,c} \leq p_{12,a}$, which we have shown before, now simplifies to $p_{16,a} \leq p_{12,a}$.
\item Since \eqref{S_14_16} simplifies to $p_{14,a} \leq p_{16,a}$, we then have $p_{14,a} \leq p_{12,a}$.
\end{itemize}

\begin{center}
\begin{varwidth}{\textwidth}
\begin{tabular}{c|ccccccccccccc}
& $R_{1}$ & $R_{2}$ & $R_{5}$ & $R_{7}$ & $R_{9}$ & $R_{10}$ & $R_{12}$ & $R_{13}$ & $R_{14}$ & $R_{15}$ & $R_{16}$ & $R_{18}$ & $R_{19}$\\\midrule
$a$ & $\nicefrac{1}{2}$ & $\nicefrac{1}{2}$ & $\nicefrac{1}{2}$ & $\nicefrac{1}{2}$ & $p_{35,a}$ & $\nicefrac{1}{2}$ & ${\geq}\,\nicefrac{1}{2}$ & $\nicefrac{1}{2}$ & ${\leq}\,p_{12,a}$ & $\nicefrac{1}{2}$ &  &  & $\nicefrac{1}{2}$\\
$b$ & $0$ & $0$ & $0$ & $0$ & $0$ & $0$ & $0$ & $0$ & $0$ & $0$ & $0$ & $0$ & $0$\\
$c$ &  & $0$ & $0$ & $0$ &  & $0$ & $0$ & $0$ &  & $0$ & $0$ & $p_{9,c}$ & $0$\\
$d$ &  & $\nicefrac{1}{2}$ & $\nicefrac{1}{2}$ & $\nicefrac{1}{2}$ &  & $\nicefrac{1}{2}$ & ${\leq}\,\nicefrac{1}{2}$ & $\nicefrac{1}{2}$ & $0$ & $\nicefrac{1}{2}$ &  &  & $\nicefrac{1}{2}$\\
\end{tabular}
\\[1em]
\begin{tabular}{c|cccccccccccc}
& $R_{20}$ & $R_{21}$ & $R_{22}$ & $R_{24}$ & $R_{26}$ & $R_{27}$ & $R_{28}$ & $R_{29}$ & $R_{34}$ & $R_{36}$ & $R_{37}$ & $R_{39}$\\\midrule
$a$ & $0$ &  & $\nicefrac{1}{2}$ & $0$ &  & $\nicefrac{1}{2}$ & $\nicefrac{1}{2}$ & $\nicefrac{1}{2}$ & $1-p_{24,c}$ & $\nicefrac{1}{2}$ & $\nicefrac{1}{2}$ & $\nicefrac{1}{2}$\\
$b$ &  & $0$ & $0$ & $0$ & $0$ & $0$ & $0$ & $0$ & $p_{24,c}$ & $0$ & $0$ & $0$\\
$c$ & $0$ & $0$ & $0$ &  & $0$ & $0$ & $0$ & $0$ & $0$ &  & $\nicefrac{1}{2}$ & $0$\\
$d$ &  &  & $\nicefrac{1}{2}$ &  &  & $\nicefrac{1}{2}$ & $\nicefrac{1}{2}$ & $\nicefrac{1}{2}$ & $0$ &  & $0$ & $\nicefrac{1}{2}$\\
\end{tabular}
\\[1em]
\begin{tabular}{c|ccccc}
& $R_{42}$ & $R_{43}$ & $R_{44}$ & $R_{45}$ & $R_{46}$\\\midrule
$a$ & $0$ & $\nicefrac{1}{2}$ & $p_{12,a}$ & $\nicefrac{1}{4}$ & $0$\\
$b$ & $0$ & $0$ & $0$ & $\nicefrac{1}{4}$ & \\
$c$ & $\nicefrac{1}{2}$ & $0$ & $0$ & $\nicefrac{1}{4}$ & $0$\\
$d$ & $\nicefrac{1}{2}$ & $\nicefrac{1}{2}$ & $1-p_{12,a}$ & $\nicefrac{1}{4}$ & \\
\end{tabular}
\end{varwidth}
\end{center}

\begin{itemize}
\item We now show that $p_{12,a} = p_{9,a} = p_{35,a}$:
\begin{itemize}
\item \eqref{S_14_9} implies $p_{9,a} \leq p_{14,a}$. Since $p_{14,a} \leq p_{12,a}$, we have $p_{9,a} \leq p_{12,a}$.
\item \eqref{S_44_40} simplifies to $p_{12,a} \leq p_{40,a}$. Moreover, \eqref{S_9_40} simplifies to $p_{40,a} \leq p_{9,a}$; therefore, we have $p_{12,a} \leq p_{9,a}$.
\item Combining these two inequalities yields $p_{12,a} = p_{9,a}$.
\end{itemize}
\item Recall that $p_{14,a} \leq p_{12,a} = p_{9,a}$. Then \eqref{S_14_9} simplifies to $p_{9,a} = p_{14,a}$ and $p_{9,d} = 0$.
\item \eqref{S_23_19} simplifies to $p_{23,a} + p_{23,d} \geq 1$ and therefore $p_{23,b} = p_{23,c} = 0$.
\item \eqref{S_35_21} simplifies to $p_{21,a} \leq p_{35,a} + p_{35,c}$. Then \eqref{S_21_35} simplifies to $p_{35,c} = 0$ and $p_{35,a} = p_{21,a}$.
\item Next, we derive the probabilities for $R_{18}$:
\begin{itemize}
\item \eqref{S_23_12} simplifies to $p_{21,a} \leq p_{23,a}$.
\item \eqref{S_23_18} simplifies to $p_{18,c} + p_{18,d} \leq 1 - p_{23,a}$. Since $p_{18,c} = p_{9,c} = 1 - p_{9,a} = 1 - p_{35,a} = 1 - p_{21,a}$, this is equivalent to $p_{18,d} \leq p_{21,a} - p_{23,a}$.
Recall that $p_{9,b} = p_{9,c} = 0$, \ie $p_{18,c} = p_{9,c} = 1 - p_{9,a} = 1 - p_{35,a} = 1 - p_{21,a}$. Substituting this in the inequality we have just derived and rearranging yields 
$p_{18,d} \leq p_{21,a} - p_{23,a}$.
\item Since $p_{21,a} \leq p_{23,a}$, the right-hand side of the above inequality is $0$ and therefore $p_{18,d} = 0$.
\end{itemize}
Now we can derive the probabilities for $R_4$:
\begin{itemize}
\item \eqref{S_47_30} simplifies to $p_{30,a} \leq p_{47,a}$.
\item \eqref{S_4_47} simplifies to $p_{47,a} + p_{47,d} \leq p_{4,a} + p_{4,d}$, \ie $p_{4,c} \leq p_{47,c}$.
\item Adding these two inequalities, we obtain $p_{4,c} + p_{30,a} \leq 1 - p_{47,d}$.
\item \eqref{S_30_21} simplifies to $p_{21,a} \leq p_{30,a}$, and with the previous inequality, we obtain $p_{4,c} + p_{21,a} \leq 1 - p_{47,d} \leq 1$. Substituting $p_{21,a} = p_{14,a}$ yields $p_{4,c} + p_{14,a} \leq 1$.
\item \eqref{S_4_18} now simplifies to $p_{4,d} = 0$ and $p_{4,c} = p_{21,d}$.
\end{itemize}
\item \eqref{S_8_26} implies $p_{26,a} \leq p_{8,d}$. Using this, \eqref{S_26_8} simplifies to $p_{26,a} = p_{8,d}$. Using this, we look at \eqref{S_8_26} again and find that it now simplifies to $p_{8,a} + p_{8,d} = 1$, \ie $p_{8,c} = p_{8,b} = 0$ and $p_{26,a} = 1 - p_{8,a}$.
\end{itemize}

\begin{center}
\begin{varwidth}{\textwidth}
\begin{tabular}{c|ccccccccccc}
& $R_{1}$ & $R_{2}$ & $R_{4}$ & $R_{5}$ & $R_{7}$ & $R_{8}$ & $R_{9}$ & $R_{10}$ & $R_{12}$ & $R_{13}$ & $R_{14}$\\\midrule
$a$ & $\nicefrac{1}{2}$ & $\nicefrac{1}{2}$ & $p_{21,a}$ & $\nicefrac{1}{2}$ & $\nicefrac{1}{2}$ &  & $p_{21,a}$ & $\nicefrac{1}{2}$ & $p_{21,a}$ & $\nicefrac{1}{2}$ & $p_{21,a}$\\
$b$ & $0$ & $0$ & $0$ & $0$ & $0$ & $0$ & $0$ & $0$ & $0$ & $0$ & $0$\\
$c$ &  & $0$ & $1-p_{21,a}$ & $0$ & $0$ & $0$ & $1-p_{21,a}$ & $0$ & $0$ & $0$ & $1-p_{21,a}$\\
$d$ &  & $\nicefrac{1}{2}$ & $0$ & $\nicefrac{1}{2}$ & $\nicefrac{1}{2}$ &  & $0$ & $\nicefrac{1}{2}$ & $1-p_{21,a}$ & $\nicefrac{1}{2}$ & $0$\\
\end{tabular}
\\[1em]
\begin{tabular}{c|cccccccccccc}
& $R_{15}$ & $R_{16}$ & $R_{18}$ & $R_{19}$ & $R_{20}$ & $R_{21}$ & $R_{22}$ & $R_{23}$ & $R_{24}$ & $R_{26}$ & $R_{27}$ & $R_{28}$\\\midrule
$a$ & $\nicefrac{1}{2}$ &  & $p_{21,a}$ & $\nicefrac{1}{2}$ & $0$ &  & $\nicefrac{1}{2}$ &  & $0$ & $1-p_{8,a}$ & $\nicefrac{1}{2}$ & $\nicefrac{1}{2}$\\
$b$ & $0$ & $0$ & $0$ & $0$ &  & $0$ & $0$ & $0$ & $0$ & $0$ & $0$ & $0$\\
$c$ & $0$ & $0$ & $1-p_{21,a}$ & $0$ & $0$ & $0$ & $0$ & $0$ &  & $0$ & $0$ & $0$\\
$d$ & $\nicefrac{1}{2}$ &  & $0$ & $\nicefrac{1}{2}$ &  &  & $\nicefrac{1}{2}$ &  &  & $p_{8,a}$ & $\nicefrac{1}{2}$ & $\nicefrac{1}{2}$\\
\end{tabular}
\\[1em]
\begin{tabular}{c|ccccccccccc}
& $R_{29}$ & $R_{34}$ & $R_{35}$ & $R_{36}$ & $R_{37}$ & $R_{39}$ & $R_{42}$ & $R_{43}$ & $R_{44}$ & $R_{45}$ & $R_{46}$\\\midrule
$a$ & $\nicefrac{1}{2}$ & $1-p_{24,c}$ & $p_{21,a}$ & $\nicefrac{1}{2}$ & $\nicefrac{1}{2}$ & $\nicefrac{1}{2}$ & $0$ & $\nicefrac{1}{2}$ & $p_{12,a}$ & $\nicefrac{1}{4}$ & $0$\\
$b$ & $0$ & $p_{24,c}$ & $0$ & $0$ & $0$ & $0$ & $0$ & $0$ & $0$ & $\nicefrac{1}{4}$ & \\
$c$ & $0$ & $0$ & $0$ &  & $\nicefrac{1}{2}$ & $0$ & $\nicefrac{1}{2}$ & $0$ & $0$ & $\nicefrac{1}{4}$ & $0$\\
$d$ & $\nicefrac{1}{2}$ & $0$ & $1-p_{21,a}$ &  & $0$ & $\nicefrac{1}{2}$ & $\nicefrac{1}{2}$ & $\nicefrac{1}{2}$ & $1-p_{12,a}$ & $\nicefrac{1}{4}$ & \\
\end{tabular}
\end{varwidth}
\end{center}

\begin{itemize}
\item \eqref{S_4_47} simplifies to $p_{21,d} \leq p_{47,c}$.
\item \eqref{S_47_30} simplifies to $p_{30,a} \leq p_{47,a}$. With this and the previous inequality, \eqref{S_30_21} simplifies to $p_{30,b} = p_{30,c} = 0$ and $p_{30,a} = p_{47,a}$. 
\item The last big and crucial step is to show that $p_{31,c} \geq \nicefrac{1}{2}$:
\begin{itemize}
\item The efficiency conditions for $R_{25}$ tell us that $p_{25,b} = 0$ or $p_{25,c} = 0$. If $p_{25,c} = 0$, then \eqref{S_25_36} immediately implies $p_{25,a} \geq \nicefrac{1}{2}$. If, on the other hand, $p_{25,b} = 0$, then \eqref{S_36_25} implies $p_{25,a} + p_{25,c} \leq p_{36,c} + \nicefrac{1}{2}$, with which \eqref{S_25_36} then also implies $p_{25,a} \geq \nicefrac{1}{2}$.
\item Using $p_{25,a} \geq \nicefrac{1}{2}$, the condition \eqref{S_25_26} implies $p_{26,a} \geq \nicefrac{1}{2}$, and therefore also $\nicefrac{1}{2} \leq p_{26,a} + p_{47,d} = 1 - p_{8,a} + p_{47,d}$.
\item Now observe that \eqref{S_4_8} simplifies to $p_{21,a} \leq p_{8,a}$, which is equivalent to $1 - p_{8,a} \leq p_{21,d}$. Combined with $p_{21,d} \leq p_{47,c}$, which we have shown before, we now have $\nicefrac{1}{2} \leq p_{47,c} + p_{47,d}$.
\item \eqref{S_30_41} implies $p_{41,a} + p_{41,c} \leq p_{47,a}$, which is equivalent to $p_{47,c} + p_{47,d} \leq p_{41,d}$.
\item \eqref{S_41_31} simplifies to $p_{31,a} + p_{31,b} + p_{31,d} \leq p_{41,a} + p_{41,c}$, which is equivalent to $p_{41,d} \leq p_{31,c}$.
\item Combining this chain of inequalities, we finally have $p_{31,c} \geq \nicefrac{1}{2}$.
\end{itemize}
\item \eqref{S_2_38} simplifies to $p_{38,a} + p_{38,c} \leq \nicefrac{1}{2}$, \ie $p_{38,b} + p_{38,d} \geq \nicefrac{1}{2}$. Using this and $p_{31,c} \geq \nicefrac{1}{2}$, the condition \eqref{S_31_38} simplifies to $p_{38,b} + p_{38,d} = p_{31,b} + p_{31,d}$. This means that $p_{31,b} + p_{31,d} \geq \nicefrac{1}{2}$, and since $p_{31,c} \geq \nicefrac{1}{2}$, we can conclude $p_{31,b} + p_{31,d} = p_{31,c} = \nicefrac{1}{2}$ and $p_{31,a} = 0$.
\end{itemize}
It is now easy to see that each of the three cases in \eqref{S_45_31} is a contradiction. We have thus shown that the conditions are inconsistent, and therefore, there is no anonymous and neutral SDS for four agents and alternatives that satisfies both strategyproofness and efficiency.
\hfill\qedsymbol

\subsection{Strategyproofness Conditions}

\Cref{tbl:sp} lists the strategyproofness conditions that were used in the impossibility proof. As explained in \Cref{sec:domains}, all manipulations are either $1$-manipulations or $2$-manipulations, i.e., a manipulator breaks or introduces a tie between two alternatives or swaps two alternatives.
They are a subset of the conditions derived by the \textit{derive\_\allowbreak strategyproofness\_\allowbreak conditions} command with a distance threshold of 2, \ie the required manipulations all have a size ${\leq}\,2$.
The first number in the name of the condition indicates the original profile and the second one is the manipulated profile (possibly with a permutation applied to the alternatives).

\newpage
\newcommand{\myvee}{\hskip1.5mm {\vee}\hskip1.5mm}
\newcommand{\mywedge}{\hskip1.5mm {\wedge}\hskip1.5mm}
\newcommand{\mybreak}{\\ &\hspace*{-2mm} }
\newcommand{\mypagebreak}{\displaybreak[0]\\[0.5em]}
\begingroup
\addtolength{\jot}{-0.4em}
\begin{fleqn}
\noindent \rule{\textwidth}{\heavyrulewidth}\\[-2.3em]
\begin{gather}
\begin{align*}\MoveEqLeft p_{2,d} + p_{2,c} \leq p_{1,d} + p_{1,c} \end{align*} \label{S_1_2} \tag{$S_{1,2}$}\mypagebreak
\begin{align*}\MoveEqLeft p_{19,a} < p_{1,a} \myvee p_{19,a} + p_{19,b} < p_{1,a} + p_{1,b} \myvee ( p_{19,a} = p_{1,a} \mywedge p_{19,a} + p_{19,b} = p_{1,a} + p_{1,b} ) \end{align*} \label{S_1_19} \tag{$S_{1,19}$}\mypagebreak
\begin{align*}\MoveEqLeft p_{1,d} + p_{1,c} < p_{2,d} + p_{2,c} \myvee p_{1,d} + p_{1,c} + p_{1,a} < p_{2,d} + p_{2,c} + p_{2,a} \mybreak\myvee ( p_{1,d} + p_{1,c} = p_{2,d} + p_{2,c} \mywedge p_{1,d} + p_{1,c} + p_{1,a} = p_{2,d} + p_{2,c} + p_{2,a} ) \end{align*} \label{S_2_1} \tag{$S_{2,1}$}\mypagebreak
\begin{align*}\MoveEqLeft p_{38,c} + p_{38,a} \leq p_{2,c} + p_{2,a} \end{align*} \label{S_2_38} \tag{$S_{2,38}$}\mypagebreak
\begin{align*}\MoveEqLeft p_{8,c} < p_{4,d} \myvee p_{8,c} + p_{8,d} < p_{4,d} + p_{4,c} \myvee ( p_{8,c} = p_{4,d} \mywedge p_{8,c} + p_{8,d} = p_{4,d} + p_{4,c} ) \end{align*} \label{S_4_8} \tag{$S_{4,8}$}\mypagebreak
\begin{align*}\MoveEqLeft p_{18,c} < p_{4,c} \myvee p_{18,c} + p_{18,b} + p_{18,a} < p_{4,c} + p_{4,b} + p_{4,a} \mybreak\myvee ( p_{18,c} = p_{4,c} \mywedge p_{18,c} + p_{18,b} + p_{18,a} = p_{4,c} + p_{4,b} + p_{4,a} ) \end{align*} \label{S_4_18} \tag{$S_{4,18}$}\mypagebreak
\begin{align*}\MoveEqLeft p_{47,d} + p_{47,a} \leq p_{4,d} + p_{4,a} \end{align*} \label{S_4_47} \tag{$S_{4,47}$}\mypagebreak
\begin{align*}\MoveEqLeft p_{7,c} + p_{7,a} < p_{5,c} + p_{5,a} \myvee p_{7,c} + p_{7,a} + p_{7,d} < p_{5,c} + p_{5,a} + p_{5,d} \mybreak\myvee ( p_{7,c} + p_{7,a} = p_{5,c} + p_{5,a} \mywedge p_{7,c} + p_{7,a} + p_{7,d} = p_{5,c} + p_{5,a} + p_{5,d} ) \end{align*} \label{S_5_7} \tag{$S_{5,7}$}\mypagebreak
\begin{align*}\MoveEqLeft p_{10,a} < p_{5,d} \myvee p_{10,a} + p_{10,c} + p_{10,d} < p_{5,d} + p_{5,b} + p_{5,a} \mybreak\myvee ( p_{10,a} = p_{5,d} \mywedge p_{10,a} + p_{10,c} + p_{10,d} = p_{5,d} + p_{5,b} + p_{5,a} ) \end{align*} \label{S_5_10} \tag{$S_{5,10}$}\mypagebreak
\begin{align*}\MoveEqLeft p_{17,c} + p_{17,a} < p_{5,c} + p_{5,a} \myvee p_{17,c} + p_{17,a} + p_{17,d} < p_{5,c} + p_{5,a} + p_{5,d} \mybreak\myvee ( p_{17,c} + p_{17,a} = p_{5,c} + p_{5,a} \mywedge p_{17,c} + p_{17,a} + p_{17,d} = p_{5,c} + p_{5,a} + p_{5,d} ) \end{align*} \label{S_5_17} \tag{$S_{5,17}$}\mypagebreak
\begin{align*}\MoveEqLeft p_{19,d} < p_{6,d} \myvee p_{19,d} + p_{19,b} < p_{6,d} + p_{6,b} \myvee p_{19,d} + p_{19,b} + p_{19,a} < p_{6,d} + p_{6,b} + p_{6,a} \mybreak\myvee ( p_{19,d} = p_{6,d} \mywedge p_{19,d} + p_{19,b} = p_{6,d} + p_{6,b} \mywedge p_{19,d} + p_{19,b} + p_{19,a} = p_{6,d} + p_{6,b} + p_{6,a} ) \end{align*} \label{S_6_19} \tag{$S_{6,19}$}\mypagebreak
\begin{align*}\MoveEqLeft p_{42,c} + p_{42,a} \leq p_{6,c} + p_{6,a} \end{align*} \label{S_6_42} \tag{$S_{6,42}$}\mypagebreak
\begin{align*}\MoveEqLeft p_{43,d} < p_{7,a} \myvee p_{43,d} + p_{43,b} < p_{7,a} + p_{7,c} \myvee p_{43,d} + p_{43,b} + p_{43,a} < p_{7,a} + p_{7,c} + p_{7,d} \mybreak\myvee ( p_{43,d} = p_{7,a} \mywedge p_{43,d} + p_{43,b} = p_{7,a} + p_{7,c} \mywedge p_{43,d} + p_{43,b} + p_{43,a} = p_{7,a} + p_{7,c} + p_{7,d} ) \end{align*} \label{S_7_43} \tag{$S_{7,43}$}\mypagebreak
\begin{align*}\MoveEqLeft p_{26,a} < p_{8,d} \myvee p_{26,a} + p_{26,b} + p_{26,d} < p_{8,d} + p_{8,b} + p_{8,a} \mybreak\myvee ( p_{26,a} = p_{8,d} \mywedge p_{26,a} + p_{26,b} + p_{26,d} = p_{8,d} + p_{8,b} + p_{8,a} ) \end{align*} \label{S_8_26} \tag{$S_{8,26}$}\mypagebreak
\begin{align*}\MoveEqLeft p_{18,d} + p_{18,a} < p_{9,d} + p_{9,a} \myvee p_{18,d} + p_{18,a} + p_{18,c} < p_{9,d} + p_{9,a} + p_{9,c} \mybreak\myvee ( p_{18,d} + p_{18,a} = p_{9,d} + p_{9,a} \mywedge p_{18,d} + p_{18,a} + p_{18,c} = p_{9,d} + p_{9,a} + p_{9,c} ) \end{align*} \label{S_9_18} \tag{$S_{9,18}$}\mypagebreak
\begin{align*}\MoveEqLeft p_{35,b} + p_{35,a} \leq p_{9,b} + p_{9,a} \end{align*} \label{S_9_35} \tag{$S_{9,35}$}\mypagebreak
\begin{align*}\MoveEqLeft p_{40,b} + p_{40,a} \leq p_{9,b} + p_{9,a} \end{align*} \label{S_9_40} \tag{$S_{9,40}$}\mypagebreak
\begin{align*}\MoveEqLeft p_{12,b} + p_{12,d} < p_{10,c} + p_{10,a} \myvee p_{12,b} + p_{12,d} + p_{12,a} < p_{10,c} + p_{10,a} + p_{10,d} \mybreak\myvee ( p_{12,b} + p_{12,d} = p_{10,c} + p_{10,a} \mywedge p_{12,b} + p_{12,d} + p_{12,a} = p_{10,c} + p_{10,a} + p_{10,d} ) \end{align*} \label{S_10_12} \tag{$S_{10,12}$}\mypagebreak
\begin{align*}\MoveEqLeft p_{15,a} + p_{15,c} < p_{10,d} + p_{10,b} \myvee p_{15,a} + p_{15,c} + p_{15,d} < p_{10,d} + p_{10,b} + p_{10,a} \mybreak\myvee ( p_{15,a} + p_{15,c} = p_{10,d} + p_{10,b} \mywedge p_{15,a} + p_{15,c} + p_{15,d} = p_{10,d} + p_{10,b} + p_{10,a} ) \end{align*} \label{S_10_15} \tag{$S_{10,15}$}\mypagebreak
\begin{align*}\MoveEqLeft p_{19,a} + p_{19,c} < p_{10,d} + p_{10,b} \myvee p_{19,a} + p_{19,c} + p_{19,d} < p_{10,d} + p_{10,b} + p_{10,a} \mybreak\myvee ( p_{19,a} + p_{19,c} = p_{10,d} + p_{10,b} \mywedge p_{19,a} + p_{19,c} + p_{19,d} = p_{10,d} + p_{10,b} + p_{10,a} ) \end{align*} \label{S_10_19} \tag{$S_{10,19}$}\mypagebreak
\begin{align*}\MoveEqLeft p_{36,a} + p_{36,b} \leq p_{10,d} + p_{10,c} \end{align*} \label{S_10_36} \tag{$S_{10,36}$}\mypagebreak
\begin{align*}\MoveEqLeft p_{10,a} + p_{10,c} + p_{10,d} \leq p_{12,d} + p_{12,b} + p_{12,a} \end{align*} \label{S_12_10} \tag{$S_{12,10}$}\mypagebreak
\begin{align*}\MoveEqLeft p_{16,c} + p_{16,a} < p_{12,c} + p_{12,a} \myvee p_{16,c} + p_{16,a} + p_{16,d} < p_{12,c} + p_{12,a} + p_{12,d} \mybreak\myvee ( p_{16,c} + p_{16,a} = p_{12,c} + p_{12,a} \mywedge p_{16,c} + p_{16,a} + p_{16,d} = p_{12,c} + p_{12,a} + p_{12,d} ) \end{align*} \label{S_12_16} \tag{$S_{12,16}$}\mypagebreak
\begin{align*}\MoveEqLeft p_{44,b} + p_{44,a} \leq p_{12,b} + p_{12,a} \end{align*} \label{S_12_44} \tag{$S_{12,44}$}\mypagebreak
\begin{align*}\MoveEqLeft p_{15,d} + p_{15,c} < p_{13,d} + p_{13,b} \myvee p_{15,d} + p_{15,c} + p_{15,a} < p_{13,d} + p_{13,b} + p_{13,a} \mybreak\myvee ( p_{15,d} + p_{15,c} = p_{13,d} + p_{13,b} \mywedge p_{15,d} + p_{15,c} + p_{15,a} = p_{13,d} + p_{13,b} + p_{13,a} ) \end{align*} \label{S_13_15} \tag{$S_{13,15}$}\mypagebreak
\begin{align*}\MoveEqLeft p_{27,a} < p_{13,a} \myvee p_{27,a} + p_{27,c} < p_{13,a} + p_{13,b} \myvee p_{27,a} + p_{27,c} + p_{27,d} < p_{13,a} + p_{13,b} + p_{13,d} \mybreak\myvee ( p_{27,a} = p_{13,a} \mywedge p_{27,a} + p_{27,c} = p_{13,a} + p_{13,b} \mywedge\mybreak\phantom{\myvee (} p_{27,a} + p_{27,c} + p_{27,d} = p_{13,a} + p_{13,b} + p_{13,d} ) \end{align*} \label{S_13_27} \tag{$S_{13,27}$}\mypagebreak
\begin{align*}\MoveEqLeft p_{9,a} < p_{14,a} \myvee p_{9,a} + p_{9,d} < p_{14,a} + p_{14,d} \myvee p_{9,a} + p_{9,d} + p_{9,c} < p_{14,a} + p_{14,d} + p_{14,c} \mybreak\myvee ( p_{9,a} = p_{14,a} \mywedge p_{9,a} + p_{9,d} = p_{14,a} + p_{14,d} \mywedge p_{9,a} + p_{9,d} + p_{9,c} = p_{14,a} + p_{14,d} + p_{14,c} ) \end{align*} \label{S_14_9} \tag{$S_{14,9}$}\mypagebreak
\begin{align*}\MoveEqLeft p_{16,c} < p_{14,d} \myvee p_{16,c} + p_{16,d} < p_{14,d} + p_{14,c} \myvee\mybreak ( p_{16,c} = p_{14,d} \mywedge p_{16,c} + p_{16,d} = p_{14,d} + p_{14,c} ) \end{align*} \label{S_14_16} \tag{$S_{14,16}$}\mypagebreak
\begin{align*}\MoveEqLeft p_{34,d} + p_{34,b} + p_{34,a} \leq p_{14,c} + p_{14,b} + p_{14,a} \end{align*} \label{S_14_34} \tag{$S_{14,34}$}\mypagebreak
\begin{align*}\MoveEqLeft p_{5,d} < p_{15,a} \myvee p_{5,d} + p_{5,b} < p_{15,a} + p_{15,c} \myvee p_{5,d} + p_{5,b} + p_{5,a} < p_{15,a} + p_{15,c} + p_{15,d} \mybreak\myvee ( p_{5,d} = p_{15,a} \mywedge p_{5,d} + p_{5,b} = p_{15,a} + p_{15,c} \mywedge p_{5,d} + p_{5,b} + p_{5,a} = p_{15,a} + p_{15,c} + p_{15,d} ) \end{align*} \label{S_15_5} \tag{$S_{15,5}$}\mypagebreak
\begin{align*}\MoveEqLeft p_{7,d} + p_{7,b} < p_{15,d} + p_{15,b} \myvee p_{7,d} + p_{7,b} + p_{7,a} < p_{15,d} + p_{15,b} + p_{15,a} \mybreak\myvee ( p_{7,d} + p_{7,b} = p_{15,d} + p_{15,b} \mywedge p_{7,d} + p_{7,b} + p_{7,a} = p_{15,d} + p_{15,b} + p_{15,a} ) \end{align*} \label{S_15_7} \tag{$S_{15,7}$}\mypagebreak
\begin{align*}\MoveEqLeft p_{10,d} < p_{15,a} \myvee p_{10,d} + p_{10,b} < p_{15,a} + p_{15,c} \myvee p_{10,d} + p_{10,b} + p_{10,a} < p_{15,a} + p_{15,c} + p_{15,d} \mybreak\myvee ( p_{10,d} = p_{15,a} \mywedge p_{10,d} + p_{10,b} = p_{15,a} + p_{15,c} \mywedge\mybreak\phantom{\myvee (} p_{10,d} + p_{10,b} + p_{10,a} = p_{15,a} + p_{15,c} + p_{15,d} ) \end{align*} \label{S_15_10} \tag{$S_{15,10}$}\mypagebreak
\begin{align*}\MoveEqLeft p_{13,d} + p_{13,b} \leq p_{15,d} + p_{15,c} \end{align*} \label{S_15_13} \tag{$S_{15,13}$}\mypagebreak
\begin{align*}\MoveEqLeft p_{3,c} + p_{3,a} \leq p_{17,c} + p_{17,a} \end{align*} \label{S_17_3} \tag{$S_{17,3}$}\mypagebreak
\begin{align*}\MoveEqLeft p_{5,c} + p_{5,a} \leq p_{17,c} + p_{17,a} \end{align*} \label{S_17_5} \tag{$S_{17,5}$}\mypagebreak
\begin{align*}\MoveEqLeft p_{7,c} + p_{7,a} \leq p_{17,c} + p_{17,a} \end{align*} \label{S_17_7} \tag{$S_{17,7}$}\mypagebreak
\begin{align*}\MoveEqLeft p_{11,c} + p_{11,a} \leq p_{17,c} + p_{17,a} \end{align*} \label{S_17_11} \tag{$S_{17,11}$}\mypagebreak
\begin{align*}\MoveEqLeft p_{9,d} + p_{9,a} \leq p_{18,d} + p_{18,a} \end{align*} \label{S_18_9} \tag{$S_{18,9}$}\mypagebreak
\begin{align*}\MoveEqLeft p_{1,b} + p_{1,a} \leq p_{19,b} + p_{19,a} \end{align*} \label{S_19_1} \tag{$S_{19,1}$}\mypagebreak
\begin{align*}\MoveEqLeft p_{10,b} + p_{10,d} \leq p_{19,c} + p_{19,a} \end{align*} \label{S_19_10} \tag{$S_{19,10}$}\mypagebreak
\begin{align*}\MoveEqLeft p_{27,d} + p_{27,b} \leq p_{19,d} + p_{19,c} \end{align*} \label{S_19_27} \tag{$S_{19,27}$}\mypagebreak
\begin{align*}\MoveEqLeft p_{21,c} < p_{20,a} \myvee p_{21,c} + p_{21,b} < p_{20,a} + p_{20,c} \mybreak\myvee ( p_{21,c} = p_{20,a} \mywedge p_{21,c} + p_{21,b} = p_{20,a} + p_{20,c} ) \end{align*} \label{S_20_21} \tag{$S_{20,21}$}\mypagebreak
\begin{align*}\MoveEqLeft p_{35,c} < p_{21,c} \myvee p_{35,c} + p_{35,b} + p_{35,a} < p_{21,c} + p_{21,b} + p_{21,a} \mybreak\myvee ( p_{35,c} = p_{21,c} \mywedge p_{35,c} + p_{35,b} + p_{35,a} = p_{21,c} + p_{21,b} + p_{21,a} ) \end{align*} \label{S_21_35} \tag{$S_{21,35}$}\mypagebreak
\begin{align*}\MoveEqLeft p_{29,a} < p_{22,d} \myvee p_{29,a} + p_{29,c} + p_{29,d} < p_{22,d} + p_{22,b} + p_{22,a} \mybreak\myvee ( p_{29,a} = p_{22,d} \mywedge p_{29,a} + p_{29,c} + p_{29,d} = p_{22,d} + p_{22,b} + p_{22,a} ) \end{align*} \label{S_22_29} \tag{$S_{22,29}$}\mypagebreak
\begin{align*}\MoveEqLeft p_{32,a} < p_{22,a} \myvee p_{32,a} + p_{32,b} < p_{22,a} + p_{22,b} \mybreak\myvee ( p_{32,a} = p_{22,a} \mywedge p_{32,a} + p_{32,b} = p_{22,a} + p_{22,b} ) \end{align*} \label{S_22_32} \tag{$S_{22,32}$}\mypagebreak
\begin{align*}\MoveEqLeft p_{12,c} + p_{12,a} \leq p_{23,c} + p_{23,a} \end{align*} \label{S_23_12} \tag{$S_{23,12}$}\mypagebreak
\begin{align*}\MoveEqLeft p_{18,c} + p_{18,d} \leq p_{23,d} + p_{23,c} \end{align*} \label{S_23_18} \tag{$S_{23,18}$}\mypagebreak
\begin{align*}\MoveEqLeft p_{19,d} + p_{19,b} + p_{19,a} \leq p_{23,d} + p_{23,b} + p_{23,a} \end{align*} \label{S_23_19} \tag{$S_{23,19}$}\mypagebreak
\begin{align*}\MoveEqLeft p_{34,b} < p_{24,c} \myvee p_{34,b} + p_{34,d} < p_{24,c} + p_{24,a} \mybreak\myvee ( p_{34,b} = p_{24,c} \mywedge p_{34,b} + p_{34,d} = p_{24,c} + p_{24,a} ) \end{align*} \label{S_24_34} \tag{$S_{24,34}$}\mypagebreak
\begin{align*}\MoveEqLeft p_{26,d} + p_{26,c} < p_{25,d} + p_{25,b} \myvee p_{26,d} + p_{26,c} + p_{26,a} < p_{25,d} + p_{25,b} + p_{25,a} \mybreak\myvee ( p_{26,d} + p_{26,c} = p_{25,d} + p_{25,b} \mywedge p_{26,d} + p_{26,c} + p_{26,a} = p_{25,d} + p_{25,b} + p_{25,a} ) \end{align*} \label{S_25_26} \tag{$S_{25,26}$}\mypagebreak
\begin{align*}\MoveEqLeft p_{36,a} < p_{25,a} \myvee p_{36,a} + p_{36,c} < p_{25,a} + p_{25,c} \mybreak\myvee ( p_{36,a} = p_{25,a} \mywedge p_{36,a} + p_{36,c} = p_{25,a} + p_{25,c} ) \end{align*} \label{S_25_36} \tag{$S_{25,36}$}\mypagebreak
\begin{align*}\MoveEqLeft p_{8,d} < p_{26,a} \myvee p_{8,d} + p_{8,b} < p_{26,a} + p_{26,c} \mybreak\myvee ( p_{8,d} = p_{26,a} \mywedge p_{8,d} + p_{8,b} = p_{26,a} + p_{26,c} ) \end{align*} \label{S_26_8} \tag{$S_{26,8}$}\mypagebreak
\begin{align*}\MoveEqLeft p_{13,b} + p_{13,a} \leq p_{27,c} + p_{27,a} \end{align*} \label{S_27_13} \tag{$S_{27,13}$}\mypagebreak
\begin{align*}\MoveEqLeft p_{19,d} + p_{19,c} < p_{27,d} + p_{27,b} \myvee p_{19,d} + p_{19,c} + p_{19,a} < p_{27,d} + p_{27,b} + p_{27,a} \mybreak\myvee ( p_{19,d} + p_{19,c} = p_{27,d} + p_{27,b} \mywedge p_{19,d} + p_{19,c} + p_{19,a} = p_{27,d} + p_{27,b} + p_{27,a} ) \end{align*} \label{S_27_19} \tag{$S_{27,19}$}\mypagebreak
\begin{align*}\MoveEqLeft p_{32,d} < p_{28,a} \myvee p_{32,d} + p_{32,b} < p_{28,a} + p_{28,c} \mybreak\myvee ( p_{32,d} = p_{28,a} \mywedge p_{32,d} + p_{32,b} = p_{28,a} + p_{28,c} ) \end{align*} \label{S_28_32} \tag{$S_{28,32}$}\mypagebreak
\begin{align*}\MoveEqLeft p_{39,a} < p_{28,a} \myvee p_{39,a} + p_{39,c} < p_{28,a} + p_{28,b} \mybreak\myvee ( p_{39,a} = p_{28,a} \mywedge p_{39,a} + p_{39,c} = p_{28,a} + p_{28,b} ) \end{align*} \label{S_28_39} \tag{$S_{28,39}$}\mypagebreak
\begin{align*}\MoveEqLeft p_{39,d} < p_{29,a} \myvee p_{39,d} + p_{39,c} < p_{29,a} + p_{29,b} \mybreak\myvee ( p_{39,d} = p_{29,a} \mywedge p_{39,d} + p_{39,c} = p_{29,a} + p_{29,b} ) \end{align*} \label{S_29_39} \tag{$S_{29,39}$}\mypagebreak
\begin{align*}\MoveEqLeft p_{21,b} + p_{21,a} < p_{30,b} + p_{30,a} \myvee p_{21,b} + p_{21,a} + p_{21,d} < p_{30,b} + p_{30,a} + p_{30,d} \mybreak\myvee ( p_{21,b} + p_{21,a} = p_{30,b} + p_{30,a} \mywedge p_{21,b} + p_{21,a} + p_{21,d} = p_{30,b} + p_{30,a} + p_{30,d} ) \end{align*} \label{S_30_21} \tag{$S_{30,21}$}\mypagebreak
\begin{align*}\MoveEqLeft p_{41,c} < p_{30,c} \myvee p_{41,c} + p_{41,b} + p_{41,a} < p_{30,c} + p_{30,b} + p_{30,a} \mybreak\myvee ( p_{41,c} = p_{30,c} \mywedge p_{41,c} + p_{41,b} + p_{41,a} = p_{30,c} + p_{30,b} + p_{30,a} ) \end{align*} \label{S_30_41} \tag{$S_{30,41}$}\mypagebreak
\begin{align*}\MoveEqLeft p_{38,b} + p_{38,d} < p_{31,d} + p_{31,b} \myvee p_{38,b} + p_{38,d} + p_{38,c} < p_{31,d} + p_{31,b} + p_{31,a} \mybreak\myvee ( p_{38,b} + p_{38,d} = p_{31,d} + p_{31,b} \mywedge p_{38,b} + p_{38,d} + p_{38,c} = p_{31,d} + p_{31,b} + p_{31,a} ) \end{align*} \label{S_31_38} \tag{$S_{31,38}$}\mypagebreak
\begin{align*}\MoveEqLeft p_{22,b} + p_{22,a} < p_{32,b} + p_{32,a} \myvee p_{22,b} + p_{22,a} + p_{22,d} < p_{32,b} + p_{32,a} + p_{32,d} \mybreak\myvee ( p_{22,b} + p_{22,a} = p_{32,b} + p_{32,a} \mywedge p_{22,b} + p_{22,a} + p_{22,d} = p_{32,b} + p_{32,a} + p_{32,d} ) \end{align*} \label{S_32_22} \tag{$S_{32,22}$}\mypagebreak
\begin{align*}\MoveEqLeft p_{28,a} < p_{32,d} \myvee p_{28,a} + p_{28,c} + p_{28,d} < p_{32,d} + p_{32,b} + p_{32,a} \mybreak\myvee ( p_{28,a} = p_{32,d} \mywedge p_{28,a} + p_{28,c} + p_{28,d} = p_{32,d} + p_{32,b} + p_{32,a} ) \end{align*} \label{S_32_28} \tag{$S_{32,28}$}\mypagebreak
\begin{align*}\MoveEqLeft p_{5,a} < p_{33,a} \myvee p_{5,a} + p_{5,b} < p_{33,a} + p_{33,b} \mybreak\myvee ( p_{5,a} = p_{33,a} \mywedge p_{5,a} + p_{5,b} = p_{33,a} + p_{33,b} ) \end{align*} \label{S_33_5} \tag{$S_{33,5}$}\mypagebreak
\begin{align*}\MoveEqLeft p_{22,d} + p_{22,c} \leq p_{33,d} + p_{33,c} \end{align*} \label{S_33_22} \tag{$S_{33,22}$}\mypagebreak
\begin{align*}\MoveEqLeft p_{24,c} < p_{34,b} \myvee p_{24,c} + p_{24,a} + p_{24,d} < p_{34,b} + p_{34,d} + p_{34,a} \mybreak\myvee ( p_{24,c} = p_{34,b} \mywedge p_{24,c} + p_{24,a} + p_{24,d} = p_{34,b} + p_{34,d} + p_{34,a} ) \end{align*} \label{S_34_24} \tag{$S_{34,24}$}\mypagebreak
\begin{align*}\MoveEqLeft p_{9,a} < p_{35,a} \myvee p_{9,a} + p_{9,b} < p_{35,a} + p_{35,b} \mybreak\myvee ( p_{9,a} = p_{35,a} \mywedge p_{9,a} + p_{9,b} = p_{35,a} + p_{35,b} ) \end{align*} \label{S_35_9} \tag{$S_{35,9}$}\mypagebreak
\begin{align*}\MoveEqLeft p_{21,c} + p_{21,b} + p_{21,a} \leq p_{35,c} + p_{35,b} + p_{35,a} \end{align*} \label{S_35_21} \tag{$S_{35,21}$}\mypagebreak
\begin{align*}\MoveEqLeft p_{10,d} < p_{36,a} \myvee p_{10,d} + p_{10,c} < p_{36,a} + p_{36,b} \mybreak\myvee ( p_{10,d} = p_{36,a} \mywedge p_{10,d} + p_{10,c} = p_{36,a} + p_{36,b} ) \end{align*} \label{S_36_10} \tag{$S_{36,10}$}\mypagebreak
\begin{align*}\MoveEqLeft p_{25,c} + p_{25,a} < p_{36,c} + p_{36,a} \myvee p_{25,c} + p_{25,a} + p_{25,d} < p_{36,c} + p_{36,a} + p_{36,d} \mybreak\myvee ( p_{25,c} + p_{25,a} = p_{36,c} + p_{36,a} \mywedge p_{25,c} + p_{25,a} + p_{25,d} = p_{36,c} + p_{36,a} + p_{36,d} ) \end{align*} \label{S_36_25} \tag{$S_{36,25}$}\mypagebreak
\begin{align*}\MoveEqLeft p_{39,d} + p_{39,c} \leq p_{36,d} + p_{36,c} \end{align*} \label{S_36_39} \tag{$S_{36,39}$}\mypagebreak
\begin{align*}\MoveEqLeft p_{42,d} < p_{37,a} \myvee p_{42,d} + p_{42,b} < p_{37,a} + p_{37,b} \mybreak\myvee ( p_{42,d} = p_{37,a} \mywedge p_{42,d} + p_{42,b} = p_{37,a} + p_{37,b} ) \end{align*} \label{S_37_42_1} \tag{$S_{37,42}\,(1)$}\mypagebreak
\begin{align*}\MoveEqLeft p_{42,d} < p_{37,c} \myvee p_{42,d} + p_{42,b} < p_{37,c} + p_{37,d} \mybreak\myvee ( p_{42,d} = p_{37,c} \mywedge p_{42,d} + p_{42,b} = p_{37,c} + p_{37,d} ) \end{align*} \label{S_37_42_2} \tag{$S_{37,42}\,(2)$}\mypagebreak
\begin{align*}\MoveEqLeft p_{2,c} + p_{2,a} < p_{39,c} + p_{39,a} \myvee p_{2,c} + p_{2,a} + p_{2,d} < p_{39,c} + p_{39,a} + p_{39,d} \mybreak\myvee ( p_{2,c} + p_{2,a} = p_{39,c} + p_{39,a} \mywedge p_{2,c} + p_{2,a} + p_{2,d} = p_{39,c} + p_{39,a} + p_{39,d} ) \end{align*} \label{S_39_2} \tag{$S_{39,2}$}\mypagebreak
\begin{align*}\MoveEqLeft p_{29,a} + p_{29,b} < p_{39,d} + p_{39,c} \myvee p_{29,a} + p_{29,b} + p_{29,d} < p_{39,d} + p_{39,c} + p_{39,a} \mybreak\myvee ( p_{29,a} + p_{29,b} = p_{39,d} + p_{39,c} \mywedge p_{29,a} + p_{29,b} + p_{29,d} = p_{39,d} + p_{39,c} + p_{39,a} ) \end{align*} \label{S_39_29} \tag{$S_{39,29}$}\mypagebreak
\begin{align*}\MoveEqLeft p_{36,d} + p_{36,c} < p_{39,d} + p_{39,c} \myvee p_{36,d} + p_{36,c} + p_{36,a} < p_{39,d} + p_{39,c} + p_{39,a} \mybreak\myvee ( p_{36,d} + p_{36,c} = p_{39,d} + p_{39,c} \mywedge p_{36,d} + p_{36,c} + p_{36,a} = p_{39,d} + p_{39,c} + p_{39,a} ) \end{align*} \label{S_39_36} \tag{$S_{39,36}$}\mypagebreak
\begin{align*}\MoveEqLeft p_{31,d} + p_{31,b} + p_{31,a} \leq p_{41,c} + p_{41,b} + p_{41,a} \end{align*} \label{S_41_31} \tag{$S_{41,31}$}\mypagebreak
\begin{align*}\MoveEqLeft p_{3,d} < p_{42,d} \myvee p_{3,d} + p_{3,b} < p_{42,d} + p_{42,b} \myvee p_{3,d} + p_{3,b} + p_{3,a} < p_{42,d} + p_{42,b} + p_{42,a} \mybreak\myvee ( p_{3,d} = p_{42,d} \mywedge p_{3,d} + p_{3,b} = p_{42,d} + p_{42,b} \mywedge p_{3,d} + p_{3,b} + p_{3,a} = p_{42,d} + p_{42,b} + p_{42,a} ) \end{align*} \label{S_42_3} \tag{$S_{42,3}$}\mypagebreak
\begin{align*}\MoveEqLeft p_{11,d} < p_{42,c} \myvee p_{11,d} + p_{11,b} < p_{42,c} + p_{42,a} \mybreak\myvee ( p_{11,d} = p_{42,c} \mywedge p_{11,d} + p_{11,b} = p_{42,c} + p_{42,a} ) \end{align*} \label{S_42_11} \tag{$S_{42,11}$}\mypagebreak
\begin{align*}\MoveEqLeft p_{24,b} + p_{24,a} \leq p_{42,b} + p_{42,a} \end{align*} \label{S_42_24} \tag{$S_{42,24}$}\mypagebreak
\begin{align*}\MoveEqLeft p_{12,b} + p_{12,a} < p_{44,b} + p_{44,a} \myvee p_{12,b} + p_{12,a} + p_{12,d} < p_{44,b} + p_{44,a} + p_{44,d} \mybreak\myvee ( p_{12,b} + p_{12,a} = p_{44,b} + p_{44,a} \mywedge p_{12,b} + p_{12,a} + p_{12,d} = p_{44,b} + p_{44,a} + p_{44,d} ) \end{align*} \label{S_44_12} \tag{$S_{44,12}$}\mypagebreak
\begin{align*}\MoveEqLeft p_{40,c} + p_{40,d} \leq p_{44,d} + p_{44,c} \end{align*} \label{S_44_40} \tag{$S_{44,40}$}\mypagebreak
\begin{align*}\MoveEqLeft p_{31,c} + p_{31,d} < p_{45,b} + p_{45,a} \myvee p_{31,c} + p_{31,d} + p_{31,b} < p_{45,b} + p_{45,a} + p_{45,c} \mybreak\myvee ( p_{31,c} + p_{31,d} = p_{45,b} + p_{45,a} \mywedge p_{31,c} + p_{31,d} + p_{31,b} = p_{45,b} + p_{45,a} + p_{45,c} ) \end{align*} \label{S_45_31} \tag{$S_{45,31}$}\mypagebreak
\begin{align*}\MoveEqLeft p_{20,c} + p_{20,a} \leq p_{46,c} + p_{46,a} \end{align*} \label{S_46_20} \tag{$S_{46,20}$}\mypagebreak
\begin{align*}\MoveEqLeft p_{37,a} + p_{37,c} < p_{46,d} + p_{46,b} \myvee p_{37,a} + p_{37,c} + p_{37,d} < p_{46,d} + p_{46,b} + p_{46,a} \mybreak\myvee ( p_{37,a} + p_{37,c} = p_{46,d} + p_{46,b} \mywedge p_{37,a} + p_{37,c} + p_{37,d} = p_{46,d} + p_{46,b} + p_{46,a} ) \end{align*} \label{S_46_37} \tag{$S_{46,37}$}\mypagebreak
\begin{align*}\MoveEqLeft p_{30,b} + p_{30,a} \leq p_{47,b} + p_{47,a} \end{align*} \label{S_47_30} \tag{$S_{47,30}$}\displaybreak[0]\\[-2.2em]\notag
\end{gather}
\noindent \rule{\textwidth}{\heavyrulewidth}
\vspace*{-1em}
\captionof{table}{The strategyproofness conditions used in the impossibility proof.}
\label{tbl:sp}
\end{fleqn}
\endgroup
\vspace{2em}

\Cref{tbl:manipulations} lists the manipulations that were used to obtain these strategyproofness conditions: the first column gives the name of the manipulation condition in the form ($S_{i,j}$), which also contains the information which two profiles are involved in the manipulation ($R_i$ and $R_j$). The next columns contain the manipulating agent, his preferences, and the false preferences that he needs to submit. The last column gives the permutation of the alternatives that yields $R_j$ when applied to the manipulated instance of $R_i$.

\begin{longtable}{ccccc}
\toprule
Condition & Agent & Old Preferences & New Preferences & Permutation\\\midrule
\eqref{S_1_2} & 1 & $\{c,d\},\{a,b\}$ & $\{c,d\},a,b$ & $(a)(b)(c)(d)$\\
\eqref{S_1_19} & 3 & $a,b,\{c,d\}$ & $\{a,b\},\{c,d\}$ & $(a)(b)(c)(d)$\\
\eqref{S_2_1} & 2 & $\{c,d\},a,b$ & $\{c,d\},\{a,b\}$ & $(a)(b)(c)(d)$\\
\eqref{S_2_38} & 1 & $\{a,c\},\{b,d\}$ & $\{a,c\},b,d$ & $(a)(b)(c)(d)$\\
\eqref{S_4_8} & 4 & $d,c,\{a,b\}$ & $c,d,\{a,b\}$ & $(a)(b)(c\ d)$\\
\eqref{S_4_18} & 3 & $c,\{a,b\},d$ & $\{a,b,c\},d$ & $(a)(b)(c)(d)$\\
\eqref{S_4_47} & 2 & $\{a,d\},\{b,c\}$ & $\{a,d\},c,b$ & $(a)(b)(c)(d)$\\
\eqref{S_5_7} & 3 & $\{a,c\},d,b$ & $a,c,d,b$ & $(a)(b)(c)(d)$\\
\eqref{S_5_10} & 4 & $d,\{a,b\},c$ & $\{b,d\},a,c$ & $(a\ d)(b\ c)$\\
\eqref{S_5_17} & 3 & $\{a,c\},d,b$ & $\{a,c\},\{b,d\}$ & $(a)(b)(c)(d)$\\
\eqref{S_6_19} & 4 & $d,b,a,c$ & $\{b,d\},a,c$ & $(a)(b)(c)(d)$\\
\eqref{S_6_42} & 3 & $\{a,c\},\{b,d\}$ & $c,a,\{b,d\}$ & $(a)(b)(c)(d)$\\
\eqref{S_7_43} & 3 & $a,c,d,b$ & $a,\{c,d\},b$ & $(a\ d)(b\ c)$\\
\eqref{S_8_26} & 3 & $d,\{a,b\},c$ & $d,b,\{a,c\}$ & $(a\ d)(b\ c)$\\
\eqref{S_9_18} & 2 & $\{a,d\},c,b$ & $\{a,d\},\{b,c\}$ & $(a)(b)(c)(d)$\\
\eqref{S_9_35} & 1 & $\{a,b\},\{c,d\}$ & $a,b,\{c,d\}$ & $(a)(b)(c)(d)$\\
\eqref{S_9_40} & 1 & $\{a,b\},\{c,d\}$ & $\{a,b\},c,d$ & $(a)(b)(c)(d)$\\
\eqref{S_10_12} & 3 & $\{a,c\},d,b$ & $\{a,c,d\},b$ & $(a\ d)(b\ c)$\\
\eqref{S_10_15} & 4 & $\{b,d\},a,c$ & $d,b,a,c$ & $(a\ d)(b\ c)$\\
\eqref{S_10_19} & 4 & $\{b,d\},a,c$ & $\{b,d\},\{a,c\}$ & $(a\ d)(b\ c)$\\
\eqref{S_10_36} & 2 & $\{c,d\},\{a,b\}$ & $d,c,\{a,b\}$ & $(a\ d)(b\ c)$\\
\eqref{S_12_10} & 4 & $\{a,b,d\},c$ & $\{b,d\},a,c$ & $(a\ d)(b\ c)$\\
\eqref{S_12_16} & 3 & $\{a,c\},d,b$ & $a,c,d,b$ & $(a)(b)(c)(d)$\\
\eqref{S_12_44} & 2 & $\{a,b\},\{c,d\}$ & $\{a,b\},d,c$ & $(a)(b)(c)(d)$\\
\eqref{S_13_15} & 3 & $\{b,d\},a,c$ & $\{b,d\},\{a,c\}$ & $(a)(b\ c)(d)$\\
\eqref{S_13_27} & 4 & $a,b,d,c$ & $\{a,b\},\{c,d\}$ & $(a)(b\ c)(d)$\\
\eqref{S_14_9} & 4 & $a,d,c,b$ & $\{a,d\},c,b$ & $(a)(b)(c)(d)$\\
\eqref{S_14_16} & 2 & $d,c,\{a,b\}$ & $\{c,d\},\{a,b\}$ & $(a)(b)(c\ d)$\\
\eqref{S_14_34} & 3 & $\{a,b,c\},d$ & $b,\{a,c\},d$ & $(a)(b)(c\ d)$\\
\eqref{S_15_5} & 4 & $a,c,d,b$ & $a,\{c,d\},b$ & $(a\ d)(b\ c)$\\
\eqref{S_15_7} & 3 & $\{b,d\},a,c$ & $d,\{a,b\},c$ & $(a)(b)(c)(d)$\\
\eqref{S_15_10} & 4 & $a,c,d,b$ & $\{a,c\},d,b$ & $(a\ d)(b\ c)$\\
\eqref{S_15_13} & 2 & $\{c,d\},\{a,b\}$ & $\{c,d\},a,b$ & $(a)(b\ c)(d)$\\
\eqref{S_17_3} & 3 & $\{a,c\},\{b,d\}$ & $c,a,\{b,d\}$ & $(a)(b)(c)(d)$\\
\eqref{S_17_5} & 3 & $\{a,c\},\{b,d\}$ & $\{a,c\},d,b$ & $(a)(b)(c)(d)$\\
\eqref{S_17_7} & 3 & $\{a,c\},\{b,d\}$ & $a,c,d,b$ & $(a)(b)(c)(d)$\\
\eqref{S_17_11} & 3 & $\{a,c\},\{b,d\}$ & $c,a,b,d$ & $(a)(b)(c)(d)$\\
\eqref{S_18_9} & 2 & $\{a,d\},\{b,c\}$ & $\{a,d\},c,b$ & $(a)(b)(c)(d)$\\
\eqref{S_19_1} & 1 & $\{a,b\},\{c,d\}$ & $a,b,\{c,d\}$ & $(a)(b)(c)(d)$\\
\eqref{S_19_10} & 4 & $\{a,c\},\{b,d\}$ & $\{a,c\},d,b$ & $(a\ d)(b\ c)$\\
\eqref{S_19_27} & 2 & $\{c,d\},\{a,b\}$ & $\{c,d\},a,b$ & $(a)(b\ c)(d)$\\
\eqref{S_20_21} & 3 & $a,c,\{b,d\}$ & $a,\{c,d\},b$ & $(a\ c\ b\ d)$\\
\eqref{S_21_35} & 3 & $c,\{a,b\},d$ & $\{a,b,c\},d$ & $(a)(b)(c)(d)$\\
\eqref{S_22_29} & 3 & $d,\{a,b\},c$ & $\{b,d\},a,c$ & $(a\ d)(b\ c)$\\
\eqref{S_22_32} & 4 & $a,b,\{c,d\}$ & $\{a,b\},d,c$ & $(a)(b)(c)(d)$\\
\eqref{S_23_12} & 3 & $\{a,c\},\{b,d\}$ & $\{a,c\},d,b$ & $(a)(b)(c)(d)$\\
\eqref{S_23_18} & 2 & $\{c,d\},\{a,b\}$ & $c,d,\{a,b\}$ & $(a)(b)(c\ d)$\\
\eqref{S_23_19} & 4 & $\{a,b,d\},c$ & $\{b,d\},a,c$ & $(a)(b)(c)(d)$\\
\eqref{S_24_34} & 3 & $c,a,\{b,d\}$ & $c,\{a,d\},b$ & $(a\ d)(b\ c)$\\
\eqref{S_25_26} & 2 & $\{b,d\},a,c$ & $\{b,d\},\{a,c\}$ & $(a)(b\ c)(d)$\\
\eqref{S_25_36} & 4 & $a,c,\{b,d\}$ & $\{a,c\},d,b$ & $(a)(b)(c)(d)$\\
\eqref{S_26_8} & 4 & $a,c,\{b,d\}$ & $a,\{c,d\},b$ & $(a\ d)(b\ c)$\\
\eqref{S_27_13} & 3 & $\{a,c\},\{b,d\}$ & $a,c,d,b$ & $(a)(b\ c)(d)$\\
\eqref{S_27_19} & 2 & $\{b,d\},a,c$ & $\{b,d\},\{a,c\}$ & $(a)(b\ c)(d)$\\
\eqref{S_28_32} & 4 & $a,c,\{b,d\}$ & $a,\{c,d\},b$ & $(a\ d)(b\ c)$\\
\eqref{S_28_39} & 3 & $a,b,\{c,d\}$ & $\{a,b\},d,c$ & $(a)(b\ c)(d)$\\
\eqref{S_29_39} & 3 & $a,b,\{c,d\}$ & $\{a,b\},d,c$ & $(a\ d)(b\ c)$\\
\eqref{S_30_21} & 4 & $\{a,b\},d,c$ & $a,b,\{c,d\}$ & $(a)(b)(c)(d)$\\
\eqref{S_30_41} & 3 & $c,\{a,b\},d$ & $\{a,b,c\},d$ & $(a)(b)(c)(d)$\\
\eqref{S_31_38} & 1 & $\{b,d\},a,c$ & $\{b,d\},c,a$ & $(a\ c)(b\ d)$\\
\eqref{S_32_22} & 4 & $\{a,b\},d,c$ & $a,b,\{c,d\}$ & $(a)(b)(c)(d)$\\
\eqref{S_32_28} & 3 & $d,\{a,b\},c$ & $d,b,\{a,c\}$ & $(a\ d)(b\ c)$\\
\eqref{S_33_5} & 3 & $a,b,\{c,d\}$ & $\{a,b\},\{c,d\}$ & $(a)(b)(c)(d)$\\
\eqref{S_33_22} & 1 & $\{c,d\},\{a,b\}$ & $d,c,\{a,b\}$ & $(a)(b)(c)(d)$\\
\eqref{S_34_24} & 3 & $b,\{a,d\},c$ & $b,d,\{a,c\}$ & $(a\ d)(b\ c)$\\
\eqref{S_35_9} & 2 & $a,b,\{c,d\}$ & $\{a,b\},\{c,d\}$ & $(a)(b)(c)(d)$\\
\eqref{S_35_21} & 3 & $\{a,b,c\},d$ & $c,\{a,b\},d$ & $(a)(b)(c)(d)$\\
\eqref{S_36_10} & 4 & $a,b,\{c,d\}$ & $\{a,b\},\{c,d\}$ & $(a\ d)(b\ c)$\\
\eqref{S_36_25} & 2 & $\{a,c\},d,b$ & $a,c,\{b,d\}$ & $(a)(b)(c)(d)$\\
\eqref{S_36_39} & 1 & $\{c,d\},\{a,b\}$ & $\{c,d\},a,b$ & $(a)(b)(c)(d)$\\
\eqref{S_37_42_1} & 3 & $a,b,\{c,d\}$ & $a,b,d,c$ & $(a\ d)(b)(c)$\\
\eqref{S_37_42_2} & 4 & $c,d,\{a,b\}$ & $c,d,b,a$ & $(a\ c\ d\ b)$\\
\eqref{S_39_2} & 1 & $\{a,c\},d,b$ & $\{a,c\},\{b,d\}$ & $(a)(b)(c)(d)$\\
\eqref{S_39_29} & 4 & $\{c,d\},a,b$ & $d,c,\{a,b\}$ & $(a\ d)(b\ c)$\\
\eqref{S_39_36} & 4 & $\{c,d\},a,b$ & $\{c,d\},\{a,b\}$ & $(a)(b)(c)(d)$\\
\eqref{S_41_31} & 3 & $\{a,b,c\},d$ & $\{b,c\},a,d$ & $(a)(b)(c\ d)$\\
\eqref{S_42_3} & 3 & $d,b,a,c$ & $d,\{a,b\},c$ & $(a)(b)(c)(d)$\\
\eqref{S_42_11} & 4 & $c,a,\{b,d\}$ & $c,\{a,b\},d$ & $(a\ b)(c\ d)$\\
\eqref{S_42_24} & 2 & $\{a,b\},\{c,d\}$ & $b,a,\{c,d\}$ & $(a)(b)(c)(d)$\\
\eqref{S_44_12} & 3 & $\{a,b\},d,c$ & $\{a,b\},\{c,d\}$ & $(a)(b)(c)(d)$\\
\eqref{S_44_40} & 1 & $\{c,d\},\{a,b\}$ & $c,d,\{a,b\}$ & $(a)(b)(c\ d)$\\
\eqref{S_45_31} & 3 & $\{a,b\},c,d$ & $b,a,\{c,d\}$ & $(a\ d)(b\ c)$\\
\eqref{S_46_20} & 3 & $\{a,c\},\{b,d\}$ & $a,c,\{b,d\}$ & $(a)(b)(c)(d)$\\
\eqref{S_46_37} & 1 & $\{b,d\},a,c$ & $\{b,d\},\{a,c\}$ & $(a\ d)(b\ c)$\\
\eqref{S_47_30} & 1 & $\{a,b\},\{c,d\}$ & $\{a,b\},d,c$ & $(a)(b)(c)(d)$\\

\bottomrule
\caption{The manipulations required to obtain the strategyproofness conditions in \Cref{tbl:sp}.}
\label{tbl:manipulations}
\end{longtable}

\end{appendix}

\end{document}